\newcommand{\set}[1]{\{#1\}}
\newcommand{\inset}[2]{\{#1 \mid #2\}}
\newcommand{\name}[1]{\emph{#1}}
\newcommand{\size}[1]{\left|#1\right|}
\newcommand{\order}[1]{O(#1)}
\newtheorem{lemma}{Lemma}
\newtheorem{theorem}[lemma]{Theorem}
\newtheorem{problem}{Problem}
\newcommand{\cand}[1]{C\left(#1\right)}
\newcommand{\cin}[1]{C_{\rm{in}}{\left(#1\right)}}
\newcommand{\cout}[1]{C_{\rm{out}}{\left(#1\right)}}
\newcommand{\sig}[1]{\mathcal{#1}}
\newcommand{\DistM}[2][]{D^{(1)}_{#1}(#2)}
\newcommand{\TadM}[2][]{D^{(2)}_{#1}(#2)}
\newcommand{\DistMS}[2][]{D^{\footnotesize{(}3\footnotesize{)}}_{#1}(#2)}
\newcommand{\ELGISname}{\texttt{EBG-IS}}
\newcommand{\ELGSname}{\texttt{EBG-S}}
\title{Efficient Enumeration of Subgraphs and Induced Subgraphs with Bounded Girth}
\author[1]{Kazuhiro Kurita}
\author[2]{Kunihiro Wasa}
\author[2]{Alessio Conte}
\author[1]{Hiroki Arimura}
\author[2]{Takeaki Uno}
\affil[1]{IST, Hokkaido University, Sapporo, Japan\\
  \texttt{\{k-kurita, arim\}@ist.hokudai.ac.jp}}
\affil[2]{National Institute of Informatics, Tokyo, Japan\\
  \texttt{\{wasa, conte, uno\}@nii.ac.jp}}
\date{}
\begin{document}

\maketitle
\begin{abstract}
  The girth of a graph is the length of its shortest cycle.
Due to its relevance in graph theory, network analysis and practical fields such as distributed computing, girth-related problems have been object of attention in both past and recent literature.
In this paper, we consider the problem of listing connected subgraphs with bounded girth. As a large girth is index of sparsity, this allows to extract sparse structures from the input graph. 
We propose two algorithms, for enumerating respectively vertex induced subgraphs and edge induced subgraphs with bounded girth, both running in $O(n)$ amortized time per solution and using $O(n^3)$ space. 
Furthermore, the algorithms can be easily adapted to relax the connectivity requirement and to deal with weighted graphs.
As a byproduct, the second algorithm  can be used to answer the well known question of finding the densest $n$-vertex graph(s) of girth $k$.

\end{abstract}
%


\section{Introduction}

%

We consider the problem of finding all subgraphs and induced subgraphs with girth at least $k$ of a graph.
The girth is a measure of sparsity, as graphs with large girth are inherently sparse.
This corresponds to finding \emph{sparse} substructures of the given graph, a problem that was considered under several 
forms ~\cite{Jonson:Yannakakis:IPL:1988,Alessio:Kante:COCOON:2017} and has applications in network analysis.
In particular, this problem generalizes two well studied problems, i.e., listing all subtrees and 
induced subtrees~\cite{Read:Tarjan:Networks:1975,Shioura:Tamura:SICOMP:1997,Ferreira:Grossi:ESA:2011,Wasa:Arimura:Uno:ISAAC:2014}. 
Indeed, any graph with girth larger than $n$ may not contain a cycle, i.e., it is a tree, or a forest.

A \name{subgraph enumeration problem}, given a graph $G$ and some constraint $\sig R$, consists in outputting all the subgraphs satisfying $\sig R$ without duplicates. 
The efficiency of enumeration algorithms is often measured with respect to both the size of the input and that of the output, i.e., the number of solutions: 
an enumeration algorithm is called an
\name{amortized polynomial time algorithm}
if it runs in $\order{M\cdot poly(N)}$ time, where $N$ is the input size and $M$ is the number of solutions. 
Furthermore, the algorithm is said to have polynomial \emph{delay} if the maximum time elapsed between two consecutive outputs is polynomial. 

In this paper, we present two amortized polynomial time algorithms
for enumerating subgraphs of girth at least $k$. The first, \ELGIS, enumerates \textit{induced} subgraphs,
while the second, \ELGS, enumerates \textit{edge} subgraphs (also simply called subgraphs).
%
Both \ELGIS and \ELGS run in $\order{n\size{\sig S}}$ time 
using $\order{n^3}$ space, where $n$ is the number of nodes in $G$ and $\sig S$ is the set of all solutions.
%
%
%
The proposed algorithms will consider the enumeration of \textit{connected} subgraphs in simple graphs.
However, both algorithms can easily be applied to the enumeration of non-connected subgraphs, 
and to weighted graphs by trivial changes, with the same time and space complexity.
In these problems, 
the upper bound of the number of solutions are $\order{2^n}$ and $\order{2^m}$, respectively, where $m$ is the number of edges. 
Hence, the brute force algorithms are optimal if we evaluate the efficiency of algorithms only the input size. 
When we describe a more efficient algorithm, reducing amortized complexity is important~\cite{Kazuhiro:Kunihiro:arXiv:2018}.  
Indeed, our implementation of \ELGS \footnote{The implementation of \ELGS in the github repository: https://github.com/ikn-lab/EnumerationAlgorithms/tree/master/BoundedGirth/}
is almost $560$ times faster than the brute force algorithm when the input graph is a complete graph $K_8$ and girth is four.

While the problem of efficiently enumerating subgraphs with bounded girth has been considered for \emph{directed} graphs~\cite{Alessio:Kazuhiro:COCOA:2017}, to the best of our knowledge, there is no known efficient algorithm for the \emph{undirected} version of the problem.
\footnote{We remark that the techniques in~\cite{Alessio:Kazuhiro:COCOA:2017} do not extend to undirected graphs, thus motivating a separate study. 
In directed graphs, a $u$-$v$ path and a $v$-$u$ path are distinct. However, a $u$-$v$ path and a $v$-$u$ path may be same in undirected graphs. }

An early result on girth computation is the algorithm by Itai and Rodeh~\cite{Itai:Rodeh:SIAM:1978}, that finds the girth of a graph in $\order{nm}$ time. In more recent work, the problem was also solved in linear time for planar graphs~\cite{Chang:Lu:SIAM:2013}. 
However, the problem we consider involves computing the girth of many subgraphs, so relying on these algorithms is not efficient. 

A prominent question related to the girth is finding exactly how dense a graph of given girth can be:
the maximum number of edges in a $d$-regular graph with girth $k$ is bounded by the well known \emph{Moore bound}~\cite{Bollobas:BOOK:2004}, which Alon later proved to be tight on general graphs as well~\cite{Alon:Hoory:GC:2002}.
Erd{\H{o}}s conjectured that there exists a graph with $\Omega(n^{1 + 1/k})$ edges and girth $2k + 1$~\cite{Parter:ICALP:2014}. 
On the other hand, some have focused on giving practical lower bounds, i.e., finding ways to generate graphs of given girth as dense as possible~\cite{Lazebnik:Ustimenko:BAMS:1995,Chandran:SIMADM:2003}. We remark that our proposed algorithm \ELGS can match theory and practice: the densest $n$-vertex graph of girth $k$ can be found as a subgraph of the complete graph $K_n$. While this may not be practical for large values of $n$, it significantly improves upon the brute force approach by avoiding the generation of subgraphs with girth $< k$.


 \section{Preliminaries}
\label{sec:prelim}

Let $G = (V(G), E(G))$ be a simple undirected graph with no self-loops, with vertex set $V(G)$ and edge set $E(G) \subseteq V(G)\times V(G)$.
Two vertices $u$ and $v$ are \name{adjacent} (or neighbors) if there is an edge $e = \set{u,v} \in E(G)$ joining them. 
We call $e$ incident to $v$ and we denote the set of incident edges to $v$ $E(v)$. 
The set of neighbors of $u$ in $G$ is called its \name{neighborhood} and denoted by $N_G(u)$ and 
the size of $N_G(u)$ is called the \name{degree} of $u$ in $G$. 
Let $N_G[u] = N_G(u) \cup \set{u}$ be the closed neighborhood of $u$.
The \name{set of neighbors} of $U \subseteq V$ is defined as $N_G(U) = \bigcup_{u \in U}N_G(u) \setminus U$.
Similarly, $N_G[U]$ denotes $N_G(U) \cup U$.
For any vertex subset $S \subseteq V$, 
we call $G[S] = (S, E[S])$ an \name{induced subgraph},
where $E[S] = E(G) \cap (S \times S)$.
Since $G[S]$ is uniquely determined by $S$, 
we sometimes identify $G[S]$ with $S$. 
For any edge subset $E' \subseteq E$, we call $G[E'] = (V'(E'), E')$ \name{edge induced subgraph}, 
where $V'(E') = \bigcup_{\set{u, v} \in E'} u$. 
We define $G \setminus \set{e} = (V, E \setminus \set{e})$ and 
$G\setminus \set{v} = G[V\setminus\set{v}]$.
For simplicity, we use $v \in G$ and $e \in G$ to refer to $v \in V(G)$ and $e \in E(G)$, respectively. If $G$ is clear from the context, we will also use simplified notation such as $V$, $E$, $N(u)$ instead of $V(G)$, $E(G)$, $N_G(u)$.

A sequence $P = (v_1, \dots, v_{k+1})$ of distinct vertices is a \name{path} from $v_1$ to $v_{k+1}$ ($v_1$-$v_{k+1}$ path for short) in $G = (V, E)$ 
if for any $i \in [1, k]$,  $\set{v_i, v_{i+1}} \in E$. 
$P$ is a \name{shortest path} between two vertices if there is no shorter path between them. 
Let us denote by $V(P)$ and $E(P)$ the set of vertices and edges in $P$, respectively. 
We say that $G$ is \name{connected} if
for any two vertices $u, v \in V$, there is a $u$-$v$ path. 
We say that a sequence $C = (v_1,\dots, v_{k+1})$ of vertices is a \name{cycle} if 
$(v_1, \dots, v_{k})$ is a $v_1$-$v_{k}$ path, $v_{k+1} = v_1$, and $\set{v_k, v_{k+1}} \in E$. 
The \name{length} of a path or cycle is defined by its number of edges. 
The \name{distance} between two vertices is the length of a shortest path between them.  
The \name{girth} of $G$, denoted by $g(G)$, is the length of a shortest cycle in $G$. 
For simplicity, we say that $G$ has girth $k$ if $g(G)\ge k$.
The girth of acyclic graphs is usually assumed to be $\infty$.


\begin{figure}[t]
    \centering
    \includegraphics[width=0.6\textwidth]{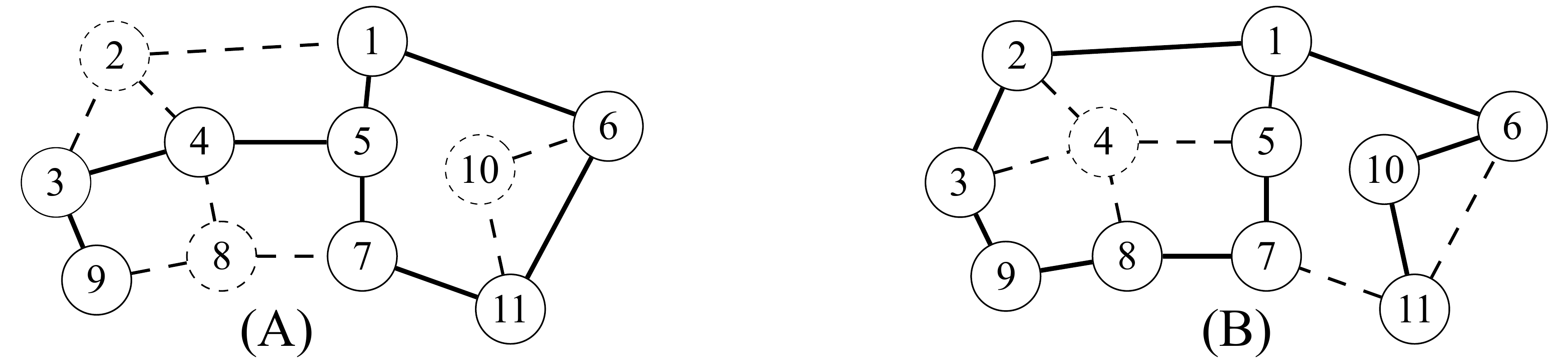}
    \caption{Dashed edges and vertices are not included by an induced subgraph and a subgraph. 
    An induced subgraph of girth five (A) and a subgraph of girth six (B). }
    \label{fig:sol}
\end{figure}

We define our problems as follows and
Fig.~\ref{fig:sol} shows examples of solutions Problem~\ref{prob:enum:ind} and Problem~\ref{prob:enum:sub}. 
If we store all outputs, then it is easy to avoid duplicates.  
Our algorithms achieve without duplicates in polynomial space. 

\begin{problem}[$k$-girth connected induced subgraph enumeration]
  \label{prob:enum:ind}
  Enumerate all connected induced subgraphs
  $S$ of a graph $G$ with $g(S)\ge k$, without duplicates.
\end{problem}

\begin{problem}[$k$-girth connected subgraph enumeration]
  \label{prob:enum:sub}
  Enumerate all connected subgraphs
  $S$ of a graph $G$ with $g(S)\ge k$, without duplicates.
\end{problem}

\section{Enumeration by binary partition}
\label{sec:basic}
The \name{binary partition method} is one of the fundamental frameworks for designing enumeration algorithms. 
Typically, a binary partition algorithm $\sig A$ has the following structure:
first $\sig A$ picks an element $x$ of the input, then 
divides the search space into two disjoint spaces, one containing the solutions that include $x$, and one those that do not. $\sig A$ recursively executes the above step until all elements are picked. 
Whenever the search space contains exactly one solution, $\sig A$ outputs it. 
We call each dividing step an \name{iteration}. 

\newcommand{\done}{\textsc{done}\xspace}

\begin{algorithm}[t]
  \caption{Enumerate all connected induced subgraphs with girth $k$.}
  \label{algo:naive}
  \Procedure(\tcp*[f]{$G$: an input graph, $k$: positive integer}){\ELG{$G, k$}}{
    \RecELG{$\emptyset,G$}\label{algo:naive:step:initial:call}\;
  }
  \Procedure(\tcp*[f]{$S$: the current solution}){\RecELG{$S, G$}}{ 
    Output $S$\;
    $\done \gets \emptyset$\;
    
    \For{$v \in \cand{S}$\label{step:for}}{
      \RecELG{$S \cup \set{v}, G\setminus \done$}\;
      \label{step:recursive:call}
      $\done \gets \done \cup \set{v}$\;\label{step:const2}
    }
    \Return \;
  }
\end{algorithm}

Algorithm \ELG, detailed in Algorithm~\ref{algo:naive}, represents a basic strategy for Problem~\ref{prob:enum:ind}.
Algorithm~\ref{algo:naive} is based on binary partition, although each iteration divides the search space in more than two subspaces.
While \ELG enumerates solutions by picking vertices on each iteration, we can obtain an enumeration algorithm for Problem~\ref{prob:enum:sub} by modifying \ELG so that it picks edges instead. 

Let $G$, $X$, and $S(X)$ be respectively
an input graph, an iteration, and the solution received by the iteration $X$.
A vertex $v \notin S(X)$ is a \name{candidate vertex} for $S(X)$ if  $g(S(X) \cup \set{v}) \ge k$ and $S(X)\cup\set{v}$ is connected, 
that is, the addition of a candidate vertex generates a new solution. 
Let $\cand{S(X)}$ be a set of candidate vertices for $S(X)$.  
We call $\cand{S(X)}$ the \name{candidate set} of $S(X)$. 
Now, suppose that 
$X$ generates new iterations $Y_1, \dots, Y_d$ 
by adding vertices in $\cand{S(X)} = \set{v_1, \dots, v_d}$ on line~\ref{step:recursive:call}. 
For each $i$, we say that $X$ is the \name{parent} of $Y_i$, 
and $Y_i$ is a \name{child} of $X$. 
Note that, on iteration $Y_i$ and its descendant iterations, 
\ELG outputs solutions that do not include $v_1, \dots, v_{i-1}$ but do include $v_i$. This implies that the solution space of $Y_i$ is disjoint from those of each $Y_{j<i}$ created so far, i.e., \ELG divides the solution space of $X$ in $d$ disjoint subspaces.
The only iteration without a parent is the one generated on line~\ref{algo:naive:step:initial:call}, which we call the \name{initial iteration} and denote by $I$. 
We remark that $S(I) = \emptyset$ and that $\emptyset$ is a solution. 

By using the above parent-child relation, 
we introduce the \name{enumeration tree} $\sig T(G) = \sig T = (\sig V, \sig E)$. 
Here, $\sig V$ is the set of iterations of \ELG for $G$ and  
$\sig E$ is a subset of $\sig V \times \sig V$. 
For any pair of iterations $X$ and $Y$, 
$(X, Y) \in \sig E$ if and only if $X$ is the parent of $Y$. 
We can observe that 
$\sig T$ has no cycles  
since every child iteration of $X$ receives a solution whose size is larger than $S(X)$. 
In addition, 
each iteration other than the initial iteration has exactly one parent. 
This implies that 
the initial iteration is an ancestor of all iterations 
and thus $\sig T$ is connected. 
Thus, $\sig T$ forms a tree. 
Next three lemmas show the correctness of \ELG. 

\begin{lemma}
  \label{lem:X}
  Let $G$ be a simple undirected graph and $k$ a positive integer. 
  Then, every output of \ELG induces a connected subgraph of
  girth $k$. 
\end{lemma}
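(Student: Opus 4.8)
The plan is to prove Lemma~\ref{lem:X} by structural induction on the enumeration tree $\sig T$, exploiting the definition of the candidate set. I would first establish the base case: the initial iteration $I$ has $S(I) = \emptyset$, and we may agree (as the paper does) that the empty graph is a connected graph of girth $k$ for every $k$, or simply note it is the trivial solution; if one prefers a nonempty base case, the first nontrivial iterations receive a single vertex $\set{v}$, which trivially induces a connected, acyclic graph, hence girth $\infty \ge k$.

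For the inductive step, I would consider an arbitrary non-initial iteration $Y$ with parent $X$, so that $S(Y) = S(X) \cup \set{v}$ for some $v \in \cand{S(X)}$ computed on line~\ref{step:for}. By the induction hypothesis applied to $X$, the graph $G[S(X)]$ is connected with $g(G[S(X)]) \ge k$. Now invoke the very definition of a candidate vertex: $v$ belongs to $\cand{S(X)}$ precisely when $S(X) \cup \set{v}$ is connected and $g(S(X) \cup \set{v}) \ge k$. Hence $S(Y) = S(X) \cup \set{v}$ already satisfies both required properties, and $G[S(Y)]$ is an induced subgraph of $G$ by construction (it is $G[S]$ for the vertex set $S = S(Y)$). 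Every output of \ELG is $S(X)$ for some iteration $X$ on the ``Output $S$'' line, so the claim follows for all outputs.

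The one point that needs care — and which I expect to be the main (minor) obstacle — is that the recursive call on line~\ref{step:recursive:call} passes $G \setminus \done$ rather than $G$, so a priori one must check that $\cand{\cdot}$ is being computed with respect to the right graph and that removing the vertices in $\done$ does not cause us to ``lose'' a solution or miscount girth. For correctness of Lemma~\ref{lem:X} this is actually harmless: deleting vertices from the ambient graph only removes edges, which can only \emph{increase} the girth of any induced subgraph on a surviving vertex set, and connectivity of $S(X) \cup \set{v}$ was verified at the moment the candidate was selected. So the induced subgraph $G[S(X)]$ is the same whether we regard it inside $G$ or inside $G \setminus \done$, as long as $S(X) \cap \done = \emptyset$, which holds because $\done$ collects vertices not placed into the current solution along this branch. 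I would state this as a one-line observation and then close the induction. (The harder companion facts — that \emph{every} valid solution is output, and that there are no duplicates — are presumably the content of the next two lemmas, not this one.)
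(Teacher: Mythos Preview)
Your proposal is correct and follows essentially the same approach as the paper: induction on the enumeration tree, with base case $S(I)=\emptyset$ and inductive step appealing directly to the definition of $\cand{S(X)}$. Your extra remark about $G\setminus\done$ is sound and in fact more careful than the paper, which simply glosses over that point.
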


\begin{proof}
  Let us prove the statement by induction. 
  For any iteration $X$ with $\size{S(X)} \le i$, 
  suppose that $G[S(X)]$ is connected and $g(G[S(X)]) \ge k$. 
  From the definition of $\cand{S(X)}$, $S(X)\cup\set{v'}$ is connected for any vertex $v' \in \cand{S(X)}$, and 
  $g(G[S(X) \cup \set{v}]) \ge k$ holds, thus the condition holds for all $Y$ with $\size{S(Y)}\le i+1$.
  
  Since for the initial iteration $I$, with $\size{S(I)}=0$, $G[S(I)]$ is connected\footnote{As a degenerate case, we can consider $G[\emptyset]$ connected as it contains less than two vertices.} and $g(G[S(X)]) = \infty$, the statement holds for all $i$.
\end{proof}

\begin{lemma}
  \label{lem:no:duplication}
  If $X$ and $Y$ are two distinct iterations on \ELG, then $S(X) \neq S(Y)$. 
\end{lemma}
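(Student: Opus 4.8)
The plan is to use the tree structure of the enumeration tree $\sig T$ together with two monotonicity facts: along any parent--child edge the solution strictly grows, and once a vertex is placed into \done it disappears from the graph handed to every recursive call below it. I would split on whether one of $X$, $Y$ is an ancestor of the other in $\sig T$.

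If $X$ is a proper ancestor of $Y$ (the symmetric case is identical), then following the path from $X$ down to $Y$, each step turns the current solution $S$ into $S \cup \set{v}$ for a candidate vertex $v \notin S$ on line~\ref{step:recursive:call}, so $\size{S(X)} < \size{S(Y)}$ and hence $S(X) \neq S(Y)$. Otherwise neither is an ancestor of the other; let $Z$ be their lowest common ancestor in $\sig T$, so that $X$ lies in the subtree rooted at a child $Z_a$ of $Z$ and $Y$ lies in the subtree rooted at a different child $Z_b$ of $Z$. Here $Z_a$ (resp.\ $Z_b$) is obtained from $Z$ by adding a candidate vertex $v_a$ (resp.\ $v_b$) of $S(Z)$, with $v_a \neq v_b$. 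Relabelling if necessary, assume the loop over $\cand{S(Z)}$ in \RecELG treats $v_a$ before $v_b$. Then at the moment $Z_b$ is created, line~\ref{step:const2} has already inserted $v_a$ into \done, so $Z_b$ --- and therefore every iteration in its subtree, in particular $Y$ --- receives a graph from which $v_a$ has been deleted; consequently $v_a \notin S(Y)$. On the other hand $v_a \in S(Z_a)$, and since descending from $Z_a$ to $X$ only enlarges the solution, $v_a \in S(X)$. Thus $v_a \in S(X) \setminus S(Y)$ and $S(X) \neq S(Y)$.

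The one point that needs a careful argument is the claim that $v_a$ never re-enters the picture inside the subtree rooted at $Z_b$: this holds because the deletion of a vertex is never undone further down the recursion, and $\cand{\cdot}$ is by definition a subset of the current vertex set, so $v_a$ can never again become a candidate below $Z_b$ and hence is never re-added to any solution there. Everything else is a routine induction on the depth of iterations in $\sig T$, and I expect this ``deletion is permanent'' observation to be the only real obstacle.
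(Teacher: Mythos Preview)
Your proof is correct and follows essentially the same route as the paper's: a case split on whether one iteration is an ancestor of the other, using strict growth of the solution along parent--child edges for the first case, and the lowest common ancestor together with the permanence of the \done-deletion for the second. The paper's proof is terser about the ``deletion is permanent'' point you single out, but the argument is otherwise identical.
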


\begin{proof}
  If $X$ is an ancestor of $Y$ in $\sig T$, then $S(Y)$ must contain $S(X) \cup \set{v}$ for some $v\in \cand{S(X)}$, thus the statement holds in this case, and it holds for the same reason if $Y$ is an ancestor of $X$.
  Otherwise, let $Z$ be the iteration on $\sig T$
  that is the lowest common ancestor of $X$ and $Y$. 
  Since $\sig T$ is a tree, $Z$ always exists. 
  Let $X'$ and $Y'$ be children of $Z$ such that they are ancestors of $X$ and $Y$ respectively. 
  Without loss of generality, 
  we can assume that $Y'$ is called after $X'$.  
  In line~\ref{step:recursive:call}, 
  when $Z$ picks a vertex $x$ from $\cand{S(Z)}$ to call $X'$, 
  $x$ is added to \done, and then $Y'$ is called on $G\setminus \done$. 
  This implies $Y'$ and its descendants can not include $x$, thus the statement holds in this case too.
\end{proof}

\begin{lemma}
  \label{lem:enum}
  Let $G$ be a simple undirected graph and $k$ a positive integer. 
  \ELG{$G,k$} outputs all connected induced subgraphs with girth $k$ in $G$ exactly once. 
\end{lemma}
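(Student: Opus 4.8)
The plan is to reduce the statement to one missing ingredient and then establish that ingredient by a carefully strengthened induction. Lemma~\ref{lem:X} already guarantees that every output of \ELG is a connected induced subgraph of girth at least $k$, and Lemma~\ref{lem:no:duplication} guarantees that distinct iterations output distinct solutions; since every iteration runs its ``Output'' line exactly once, it remains only to prove \emph{completeness}: for every connected induced subgraph $S$ of $G$ with $g(S)\ge k$ there is an iteration $X$ with $S(X)=S$. Combining the three facts then shows that the multiset of outputs is exactly the set of connected induced subgraphs of girth at least $k$, each appearing once.

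For completeness I would induct on a statement that also records which vertices have been discarded through \done. Write $G_X$ for the graph that iteration $X$ receives, so that $S(X)\subseteq V(G_X)$ and $\cand{S(X)}\subseteq V(G_X)\setminus S(X)$ (candidates being understood with respect to $G_X$). The claim to prove, by induction on $|S\setminus S(X)|$, is: \emph{for every iteration $X$ and every connected induced subgraph $S$ with $S(X)\subseteq S\subseteq S(X)\cup V(G_X)$ and $g(S)\ge k$, some descendant of $X$ in $\sig T$ (possibly $X$ itself) has solution exactly $S$.} Applied to the initial iteration $I$, where $S(I)=\emptyset$ and $G_I=G$, so that the hypothesis holds for every admissible target, this yields completeness.

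The base case $S=S(X)$ is immediate. For the inductive step suppose $S\supsetneq S(X)$. First I would verify that $\cand{S(X)}\cap S\ne\emptyset$: if $S(X)=\emptyset$ then every vertex of $S$ is a candidate, and otherwise connectivity of $G[S]$ supplies a vertex $w\in S\setminus S(X)$ adjacent to $S(X)$, which lies in $V(G_X)$ by hypothesis and has $g(S(X)\cup\set{w})\ge g(S)\ge k$ since $G[S(X)\cup\set{w}]$ is a subgraph of $G[S]$. Now let $w^{*}$ be the vertex of $\cand{S(X)}\cap S$ that the for-loop at $X$ processes first, and let $Z$ be the child thereby created, with $S(Z)=S(X)\cup\set{w^{*}}$. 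The key observation is that at the moment $Z$ is invoked, \done consists only of candidates processed before $w^{*}$, and none of them can belong to $S$ — such a vertex would lie in $\cand{S(X)}\cap S$ and be processed before $w^{*}$, contradicting the choice of $w^{*}$. Hence $S\setminus S(X)$ is disjoint from \done, so $S\setminus S(Z)=S\setminus(S(X)\cup\set{w^{*}})\subseteq V(G_X)\setminus\done=V(G_Z)$. Since also $S(Z)\subseteq S$ and $g(S)\ge k$, the pair $(Z,S)$ satisfies the hypothesis with $|S\setminus S(Z)|=|S\setminus S(X)|-1$, and applying the induction hypothesis to $Z$ completes the step.

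I expect the only real obstacle to be the vertex deletions performed via \done: an incautious choice of which candidate to recurse into could delete a vertex of the target $S$ permanently, so the argument rests entirely on always descending through the \emph{first} candidate that lies in $S$, which certifies that no vertex of $S$ has yet been removed. The remaining ingredients — existence of a usable candidate, monotonicity of the girth under taking induced subgraphs, and the bookkeeping identity $V(G_Z)=V(G_X)\setminus\done$ — are routine.
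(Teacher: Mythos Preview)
Your proposal is correct and follows essentially the same route as the paper: reduce to completeness via Lemmas~\ref{lem:X} and~\ref{lem:no:duplication}, then argue by induction that one can descend from the initial iteration toward $S$ by always taking the \emph{first} candidate belonging to $S$, so that no vertex of $S$ has yet entered \done. Your formulation is in fact more explicit than the paper's (you state the strengthened induction hypothesis tracking $V(G_X)$ and verify $\cand{S(X)}\cap S\neq\emptyset$ separately), but the core mechanism is identical.
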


\begin{proof}
    By Lemma~\ref{lem:X}, 
    \ELG outputs only solutions, and by Lemma~\ref{lem:no:duplication} 
    it does not output each solution more than once. 
    We show that \ELG outputs all solutions by induction. 
    Let $S$ be a solution.
    If $\size{S} = 0$, \ELG outputs the empty set. 
    
    Otherwise, there is an iteration $X_0$ such that $S(X_0)\subseteq S$ and $S\subseteq V(G)$ (that is, no vertex of $S$ has been removed from $G$). This is trivially true, e.g. for $X_0 = I$, since $S(I) = \emptyset$ and nothing has been removed from $G$.
    Note that every subgraph of a graph with girth at least $k$ must also have girth at least $k$, thus every $v\in S\setminus S(X_0)$ such that $G[S(X_0)\cup\set{v}]$ is connected must be in $\cand{S(X_0)}$. As $S$ is connected there is at least one such $v$ in $\cand{S(X_0)}$.
    
    Consider the first execution of Line~\ref{step:recursive:call} in $X$ for which a vertex $v\in S\setminus S(X_0)$ is considered to generate a child iteration $X_1$. As no vertex of $S$ was added to \done in $X_0$, we still have that $S(X_1)\subseteq S$ and $S\subseteq V(G)$ in iteration $X_1$, but $|S(X_1)| = |S(X_0)|+1$. Hence, by induction, \ELG will eventually find $S$. 
\end{proof}


Using Itai's algorithm~\cite{Itai:Rodeh:SIAM:1978} to compute the girth of a graph in $\order{mn}$, we can obtain a first trivial complexity bound for Algorithm~\ref{algo:naive}.

\begin{theorem}
  \label{theo:naive}
    \ELG solves Problem~\ref{prob:enum:ind} with delay $\order{n^2m}$.
\end{theorem}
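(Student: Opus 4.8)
The correctness of \ELG --- that it outputs every connected induced subgraph of girth $k$ exactly once --- is already established by Lemmas~\ref{lem:X},~\ref{lem:no:duplication}, and~\ref{lem:enum}, so the plan is to concentrate entirely on the delay bound. The key structural fact to exploit is that the recursion tree of \ELG is exactly the enumeration tree $\sig T$, and each recursive invocation executes \texttt{Output}\,$S$ as its very first statement; hence the sequence of outputs is precisely the pre-order listing of the nodes of $\sig T$. Bounding the delay therefore reduces to bounding the work performed between entering one node of $\sig T$ (where it outputs) and entering the next node in pre-order.

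First I would bound the cost incurred within a single iteration $X$ before any recursive call, namely the construction of $\cand{S(X)}$. A vertex can be a candidate only if it lies in $N_G(S(X))$, so at most $n$ vertices must be tested; for each such $v$ the procedure checks that $S(X)\cup\set{v}$ is connected (cheap) and that $g(G[S(X)\cup\set{v}])\ge k$. The latter graph has at most $n$ vertices and at most $m$ edges, so the algorithm of Itai and Rodeh~\cite{Itai:Rodeh:SIAM:1978} decides this in $\order{nm}$ time. Thus computing $\cand{S(X)}$ costs $\order{n^2m}$, and this happens exactly once per iteration (right after that iteration outputs).

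Next I would trace what happens between two consecutive outputs. After outputting $S(X)$, the procedure builds $\cand{S(X)}$; if it is non-empty the first recursive call immediately outputs the first child of $X$, so a single candidate-set construction separates the two outputs. If $\cand{S(X)}=\emptyset$, the procedure returns and climbs up $\sig T$ through a chain of ancestors, at each one resuming its \textbf{for} loop (updating \done and advancing to the next element of the already-materialised candidate set) until it reaches an ancestor with an unused candidate, whence it descends one step and outputs again. The chain has length at most the height of $\sig T$, which is at most $n$ since each downward edge enlarges the current solution and $\size{S}\le n$; the bookkeeping at each level costs $\order{n}$, for a total of $\order{n^2}$, dominated by $\order{n^2m}$. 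Together with the $\order{1}$ cost of emitting $\emptyset$ at the very start and the $\order{n^2}$ unwinding after the final output, this yields delay $\order{n^2m}$.

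The step requiring the most care is the observation in the previous paragraph that candidate sets are never recomputed while unwinding the recursion --- each $\cand{S(\cdot)}$ is materialised once, when its iteration is first entered --- so that exactly one candidate-set construction lies between two consecutive outputs; a careless accounting would charge $\order{n^2m}$ at every level climbed and yield only an $\order{n^3m}$ delay. One should also make explicit that the \textbf{for} loop iterates over the set $\cand{S(X)}$ computed once at the beginning of the iteration (not re-evaluated as \done grows), and that the graph $G\setminus\done$ passed to a child only influences that child's own candidate set, not $X$'s.
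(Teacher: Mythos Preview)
Your proof is correct and follows essentially the same approach as the paper: correctness via Lemma~\ref{lem:enum}, and the $\order{n^2m}$ bound from testing at most $n$ candidate vertices with Itai--Rodeh's $\order{nm}$ girth algorithm. The paper compresses the delay analysis into the single sentence ``since every iteration outputs a solution, it is sufficient to bound the time complexity of one iteration,'' whereas you unpack this by explicitly tracing the preorder traversal and accounting for the backtracking phase; your more careful treatment is welcome but not a different route.
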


\begin{proof}
  By Lemma~\ref{lem:enum}, 
  \ELG enumerates all solutions without duplication. 

  As for its delay, since every iteration outputs a solution, it is sufficient to bound the time complexity of one iteration.  
  The bottleneck of \RecELG is Line~\ref{step:for}: in order to compute $\cand{S(X)}$, \ELG must iterate over all vertices $v\in V(G)$ and check whether the girth of $G[S(X) \cup \set{v}]$ is $k$.

  By using time Itai's algorithm~\cite{Itai:Rodeh:SIAM:1978}, we can test each $v$ in $\order{nm}$, thus the total cost is bounded by $|V(G)|\cdot \order{nm} = \order{n^2m}$.
\end{proof}

\paragraph{\textbf{Non-induced, weighted, and non-connected case.}}
Let us briefly show how \ELG also applies to some variants of the problem.
Firstly, we can solve Problem~\ref{prob:enum:sub}, i.e., enumerate \textit{edge} subgraphs, by modifying \ELG as follows:
Each solution is a set of edges $S\subseteq E$, and the candidate set $\cand{S(X)}$ becomes $\cand{S(X)} = \inset{e \in E(X)}{G[S(X) \cup \set{e}] \text{ is connected and }$ $g(G[S(X) \cup \set{v}]) \ge k}$.
It is straightforward to see that Lemma~\ref{lem:enum} still holds (replacing the word \textit{induced} with \textit{edge} in the statement), and that the modified algorithm will solve Problem~\ref{prob:enum:sub} in polynomial delay and polynomial space. 

Furthermore, we can consider the \textit{weighted} version of the problem, where the length of a cycle is the sum of the weights of its edges: we can find the girth in this case by adapting the Floyd-Warshall algorithm, and thus still enumerate all solutions for both the induced and edge subgraph version of the problem, in polynomial delay and polynomial space. 

Finally, we consider non-connected case, i.e., where the solutions are all induced or edge subgraphs of girth $k$, and not just the connected ones: this is trivially done by redefining the candidate set as $\cand{S(X)} = \inset{v \in V(G)}{g(G[S(X) \cup \set{v}]) \ge k}$ for Problem~\ref{prob:enum:ind}, and similarly for Problem~\ref{prob:enum:sub}. 
If $G[S]$ is not connected, its girth is the minimum among that of its connected components, thus we can still use Itai's algorithm (or Floyd-Warshall if weighted edges are considered as well), and again obtain polynomial delay and polynomial space.


\section{Induced subgraph enumeration}
\label{sec:propose}
The bottleneck of \ELG  
is the computation of the candidate set. 
In this section, 
we present a more efficient algorithm \ELGIS for Problem~\ref{prob:enum:ind}. 
\ELGIS is based on \ELG, but each iteration exploits information from the parent iteration, and maintains distances in order to improve the computation of the candidate set. The procedure is shown in Algorithm~\ref{algo:elg}. 

\begin{algorithm}[t]
  \caption{Updating data structures in \ELGISname. }
  \label{algo:elg}
  \Procedure{\NextC{$v, \cand{S}, \DistM{S}, \TadM{S}, S, k, G$}}{
    $\cand{S \cup \set{v}}  \gets$ \UpdateCand{$v, S$}\;
    $\DistM{S \cup \set{v}} \gets$ \UpdateOne{$v, \cand{S \cup \set{v}}$}\;
    $\TadM{S \cup \set{v}}  \gets$ \UpdateTwo{$v, \cand{S \cup \set{v}}$}\;
  }
  \Fn{\UpdateCand{$v, S$}}{
    $\cand{S\cup\set{v}} \gets N(v) \cup \cand{S}$\;
    \ForEach{$u \in \cand{S}$}{
      \lIf{$\DistM[uv]{S} + \TadM[uv]{S} \ge k$}{
        $\cand{S\cup\set{v}} \gets \cand{S\cup\set{v}} \cup \set{u}$
      }
    }
    \Return $\cand{S\cup\set{v}}$\;
  }
  \Fn{\UpdateOne{$v, \cand{S\cup\set{v}}$}}{
    \ForEach{$u \in \cand{S\cup\set{v}} \cup S, w \in \cand{S\cup\set{v}}$}{ 
      $\DistM[uw]{S} \gets \min\set{\DistM[uw]{S}, \DistM[uvw]{S}}$
    }
    \Return $\DistM{S \cup\set{v}}$
  }
  \Fn{\UpdateTwo{$v, \cand{S \cup \set{v}}$}}{
    \ForEach{$u, w \in \cand{S \cup \set{v}}$}{ 
      $p_1 \gets \min\set{\DistM[uw]{S}, \DistM[uvw]{S \cup \set{v}}, \TadM[uw]{S}}$\;
      $p_2 \gets \text{the second smallest length in }\set{\DistM[uw]{S}, \DistM[uvw]{S \cup \set{v}}, \TadM[uw]{S}}$\;
      \If(\tcp*[f]{$x \in N(u)\cap S \cup \set{v}$}){$p_1 + p_2 \ge k$}{
        $p_2 \gets \text{the second smallest length in }\set{\DistM[xw]{S \cup \set{v}} + 1}$\;
      }
      $\TadM[uw]{S \cup \set{v}} \gets p_2$\;
    }
    \Return $\TadM{S \cup \set{v}}$\;
  }
\end{algorithm}


\ELGIS uses the second distance between vertices defined as follows. 
Let $v$ be a vertex in $\cand{S} \cup S$, and 
$u$ and $u'$ be vertices in $\cand{S}$. 
We denote by $\DistM[uv]{S}$ the distance between $v$ and $u$ in $G[S \cup \set{v, u}]$, 
and by $\TadM[uu']{S}$ the distance between $u$ and $u'$ in $G[S \cup \set{u, u'}] \setminus \set{e_0}$, 
where $e_0 = (u, \cdot)$ is the first edge on a shortest path between $u$ and $u'$. 
Note that for any vertices $x \in G\setminus\set{\cand{S}\cup S}$, $y\in G \setminus \cand{S}$, and $y'\in G \setminus \cand{S}$, 
$\DistM[xy]{S} = \infty$ and $\TadM[yy']{S} = \infty$. 
Especially, 
we call $\TadM[uu']{S}$ the \name{second distance} between $u$ and $u'$ in $G[S \cup \set{u, u'}]$. 
In addition, 
we call a path whose length is the second distance a \name{second shortest path}. 
Moreover, we write $\DistM[uwv]{S}$ and $\TadM[uwv]{S}$ 
for the distance and the second distance from $u$ to $v$ via a vertex $w$, respectively. 
Let $P$ and $P'$ be respectively a $v$-$u$ shortest path and a $v$-$u$ second shortest path. 
Since $P$ and $P'$ do not share $e_0$ but do share their ends,  
$H$ must have a cycle including $v$ and $u$, 
where $H$ is a subgraph of $G$ such that $V(H) = V(P)\cup V(P')$ and $E(H) = E(P) \cup E(P')$. 
Fig.~\ref{fig:dist} (C) shows an example of a cycle made by $P$ and $P'$. 
To compute the candidate set efficiently, 
we will use the following lemmas. 
In the following lemmas, 
let $X$ and $Y$ be two iterations such that $X$ is the parent of $Y$, 
and $v$ be a vertex in $\cand{S(X)}$ such that $S(Y) = S(X) \cup \set{v}$. 

\begin{figure}[t]
    \centering
    \includegraphics[width=0.7\textwidth]{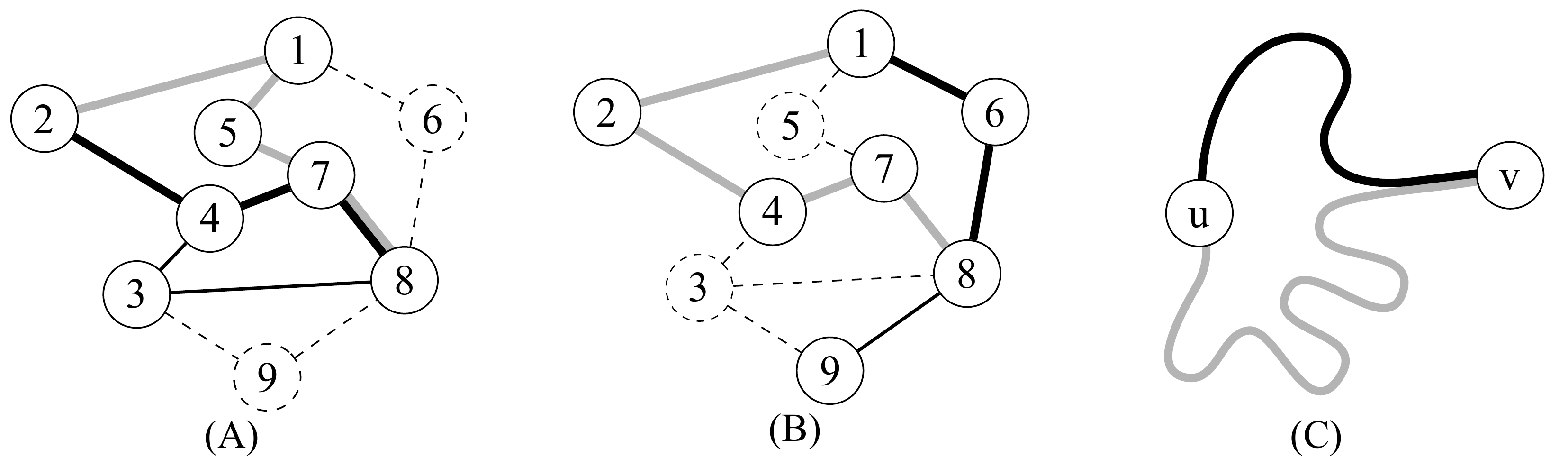}
    \caption{
    (A) and (B) show two induced subgraphs. 
    (C) shows a shortest path and a second shortest path.  
    Dashed edges and vertices are not contained by induced subgraphs. 
    Black and gray paths show respectively shortest and second shortest paths. 
    }
    \label{fig:dist}
\end{figure}

\begin{lemma}
  \label{lem:dist}
  Let $u$ and $w$ be two vertices in $\cand{S(X)}$ and 
  $k= g(G[S(X)])$. 
  (A) $g(G[S(X) \cup \set{u, w}]) \ge k$ 
  if and only if
  (B) $\DistM[uw]{S(X)} + \TadM[uw]{S(X)} \ge k$. 
\end{lemma}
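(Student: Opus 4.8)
The plan is to prove the two directions of the equivalence separately, each by contraposition, reducing everything to a statement relating short cycles of $H = G[S(X)\cup\set{u,w}]$ to pairs of short $u$-$w$ paths in $H$. Throughout I will use that $G[S(X)]$ is connected with $g(G[S(X)])=k$ (Lemma~\ref{lem:X}), and that, since $u,w\in\cand{S(X)}$, the graphs $G[S(X)\cup\set{u}]$ and $G[S(X)\cup\set{w}]$ are connected and still have girth at least $k$; consequently $H$ is connected (the degenerate case $S(X)=\emptyset$ is immediate, since then $k=\infty$ and $H$ has at most two vertices), so $\DistM[uw]{S(X)}$ is finite.

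For the contrapositive of (A)$\Rightarrow$(B), assume $\DistM[uw]{S(X)}+\TadM[uw]{S(X)}<k$, so both terms are finite. Take a shortest $u$-$w$ path $P$ in $H$, let $e_0$ be its first edge (incident to $u$), and take a shortest $u$-$w$ path $P'$ in $H\setminus\set{e_0}$; then $\size{E(P)}=\DistM[uw]{S(X)}$ and $\size{E(P')}=\TadM[uw]{S(X)}$. Traversing $P'$ from $u$, let $z$ be the first vertex that also lies on $P$ (it exists because $w$ does). Joining the $u$-to-$z$ portion of $P'$ with the $z$-to-$u$ portion of $P$ gives a cycle $C^{*}$ of $H$: the two portions meet only in $u$ and $z$ by the choice of $z$, and $C^{*}$ is a genuine cycle (length at least $3$) because $P'$ leaves $u$ along an edge different from $e_0$ while $P$ reaches $u$ along $e_0$, so the two portions cannot coincide. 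Its length is at most $\size{E(P)}+\size{E(P')}=\DistM[uw]{S(X)}+\TadM[uw]{S(X)}<k$, hence $g(H)<k$ and (A) fails.

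For the contrapositive of (B)$\Rightarrow$(A), assume $g(H)<k$ and fix a cycle $C$ of $H$ with $\size{E(C)}<k$. I first localize $C$: it cannot avoid both $u$ and $w$, for then it would be a cycle of $G[S(X)]$ shorter than $g(G[S(X)])=k$; and it cannot pass through exactly one of them, say through $u$ but not $w$, for then it would be a cycle of $G[S(X)\cup\set{u}]$ shorter than the girth of $G[S(X)\cup\set{u}]$ (which is at least $k$). Hence $C$ passes through both $u$ and $w$, and therefore decomposes into two internally vertex-disjoint $u$-$w$ paths $Q_1$ and $Q_2$ with $\size{E(Q_1)}+\size{E(Q_2)}=\size{E(C)}<k$. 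Since $G$ is simple, $Q_1\ne Q_2$ and they share no edge, so $e_0$ lies in at most one of them; say $e_0\notin E(Q_2)$. Then $Q_1$ is a $u$-$w$ path in $H$ and $Q_2$ is a $u$-$w$ path in $H\setminus\set{e_0}$, so $\DistM[uw]{S(X)}\le\size{E(Q_1)}$ and $\TadM[uw]{S(X)}\le\size{E(Q_2)}$, whence $\DistM[uw]{S(X)}+\TadM[uw]{S(X)}<k$ and (B) fails.

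I expect the main obstacle to be the localization of $C$ in the second direction: the whole argument rests on the fact that no cycle of length below $k$ can be confined to $G[S(X)]$, $G[S(X)\cup\set{u}]$, or $G[S(X)\cup\set{w}]$, which is precisely where the hypotheses $k=g(G[S(X)])$ and $u,w\in\cand{S(X)}$ are used (adding one candidate vertex cannot drop the girth below $k$). A secondary point to spell out carefully is that a closed walk built from two distinct $u$-$w$ paths genuinely contains a simple cycle of length at most the sum of the two path lengths, and that the paths $Q_1,Q_2$ extracted from $C$ are edge-disjoint at their common ends $u$ and $w$; both are consequences of $G$ being simple, but deserve an explicit word.
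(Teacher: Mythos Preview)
Your proof is correct and follows essentially the same route as the paper: for (B)$\Rightarrow$(A) both arguments localize a short cycle of $G[S(X)\cup\{u,w\}]$ to pass through both $u$ and $w$ (using that $G[S(X)]$, $G[S(X)\cup\{u\}]$, and $G[S(X)\cup\{w\}]$ each have girth at least $k$), then split the cycle into two $u$-$w$ paths and compare against $\DistM[uw]{S(X)}$ and $\TadM[uw]{S(X)}$. Your handling of that split is in fact a little cleaner than the paper's: you note that the two halves are edge-disjoint, hence at most one can contain the fixed edge $e_0$, which directly yields the bound; the paper instead sub-cases on whether the shorter half is a genuine shortest path and, when it is not, chases an auxiliary cycle $C'$. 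You also spell out (A)$\Rightarrow$(B) via an explicit cycle extraction from $P$ and $P'$, which the paper waves off as clear from the definitions.
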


\begin{proof}
  Clearly, (A) $\to$ (B) holds by definition of $\DistM{S(X)}$ and $\TadM{S(X)}$. 
  For the direction (B) $\to$ (A), consider a shortest cycle $C$ in $G[S(X) \cup \set{u, w}])$ in the following three cases: 
  (I) $u, w \notin C$: 
  $\size{C} \ge k$ since $g(G[S(X)])\ge k$.
  (II) Either $u$ or $w$ in $C$: 
  $\size{C} \ge k$ since $u$ and $w$ belong to $\cand{S(X)}$. 
  (III) Both $u$ and $w$ in $C$: 
  $C$ can be decomposed into two $u$-$w$ paths $P$ and $Q$. 
  Without loss of generality, $\size{P} \le \size{Q}$. 
  If $P$ is a $u$-$w$ shortest path, then $\size{C} \ge k$ from (B),
  since $Q$ is at least as long as the \textit{second distance} $\TadM[uw]{S(X)}$. 
  Otherwise, there is a $u$-$w$ shortest path $P'$ and 
  a cycle $C'$ consisting of a part of $P$ (or $Q$) and a part of $P'$. 
  If $C'$ contains $w$, then $\size{C'} = \size{C} \ge k$ since $C$ is a shortest cycle. 
  If $C'$ does not contain $w$, then $\size{C'}$ is a cycle in $G[S(X) \cup \set{u}]$, thus $\size{C'}\ge k$ because $u\in \cand{S(X)}$.
\end{proof}


\begin{lemma}
  \label{lem:candX1}
  \ELGIS computes $\cand{S(Y)}$ in 
  $\order{\size{\cand{S(X)}} + \size{N(v)}}$ time. 
\end{lemma}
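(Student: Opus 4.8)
The plan is to analyze the \UpdateCand{} subroutine directly, since by the code of \NextC{} this is exactly what computes $\cand{S(Y)} = \cand{S(X)\cup\set{v}}$. I would first argue correctness: by Lemma~\ref{lem:dist}, a vertex $u \in \cand{S(X)}$ belongs to $\cand{S(X)\cup\set{v}}$ precisely when $g(G[S(X)\cup\set{v,u}]) \ge k$, which — since $v$ was just added to the solution and now plays the role of an ordinary solution vertex for the pair test — is captured by the inequality $\DistM[uv]{S(X)} + \TadM[uv]{S(X)} \ge k$ checked in the \textbf{if}-line. Separately, every vertex of $N(v)$ is trivially a candidate for $S(X)\cup\set{v}$: it is adjacent to $v$ hence the induced subgraph stays connected, and adding a degree-one attachment cannot create a cycle, so the girth is preserved. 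Thus $\cand{S(Y)}$ is correctly formed as $N(v)$ together with those $u \in \cand{S(X)}$ passing the distance test. (One should note that a candidate of $S(Y)$ must either lie in $N(v)$ or already be a candidate of $S(X)$, because girth is monotone under taking induced subgraphs — a vertex whose addition to $S(Y)$ keeps girth $\ge k$ also keeps girth $\ge k$ when added to the subgraph $S(X)$ — so no candidate is missed.)

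For the running time, I would simply read off the cost of \UpdateCand{}: initializing $\cand{S\cup\set{v}}$ to $N(v)\cup\cand{S}$ takes $\order{\size{N(v)} + \size{\cand{S}}}$ time (assuming the candidate set is stored so that union is linear, e.g. as a list plus a characteristic vector indexed by $V(G)$, which fits in the $\order{n^3}$ space budget), and the \textbf{foreach} loop runs once per $u \in \cand{S}$, each iteration doing an $\order{1}$ lookup in the stored matrices $\DistM{S}$, $\TadM{S}$ and an $\order{1}$ comparison and possible insertion. This gives $\order{\size{\cand{S(X)}} + \size{N(v)}}$ total, as claimed.

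The one genuinely delicate point — and the step I expect to be the main obstacle — is justifying that the quantities $\DistM[uv]{S(X)}$ and $\TadM[uv]{S(X)}$ the algorithm uses for the test are the \emph{correct} distance and second distance in $G[S(X)\cup\set{u,v}]$, i.e. that the matrices maintained by \UpdateOne{} and \UpdateTwo{} in the parent iteration indeed hold these values when \UpdateCand{} reads them. Strictly speaking this is an invariant about the data structures that must be set up and carried through the recursion, rather than something local to this one call; here I would invoke it as an induction hypothesis (the matrices passed into \NextC{} are correct for $S(X)$) and defer the proof that \UpdateOne{} and \UpdateTwo{} re-establish it for $S(Y)$ to the subsequent lemmas. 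Granting that invariant, the correctness argument above goes through verbatim via Lemma~\ref{lem:dist}, and the time bound is immediate from inspecting the loop.
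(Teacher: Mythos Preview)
Your overall approach matches the paper's: filter $\cand{S(X)}$ using the constant-time test supplied by Lemma~\ref{lem:dist}, observe that any genuinely new candidate must lie in $N(v)$ by connectivity, and read off the $\order{\size{\cand{S(X)}}+\size{N(v)}}$ bound. The running-time accounting is fine.

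There is, however, a real gap in your correctness argument. You assert that ``every vertex of $N(v)$ is trivially a candidate for $S(X)\cup\set{v}$'' because it is a ``degree-one attachment''. That reasoning only applies to $w\in N(v)\setminus N[S(X)]$. A neighbour $w$ of $v$ that \emph{also} has neighbours in $S(X)$ is not attached with degree one, and adding it to $S(Y)$ can create a short cycle: either $w$ already failed the girth test against $S(X)$ (so $w\notin\cand{S(X)}$ and by monotonicity $w\notin\cand{S(Y)}$), or $w\in\cand{S(X)}$ but the new paths through $v$ close a cycle of length $<k$ (precisely the case the distance test is meant to reject). In both cases $w\in N(v)$ yet $w\notin\cand{S(Y)}$, so your claimed description $\cand{S(Y)}=N(v)\cup\{u\in\cand{S(X)}:\text{test passes}\}$ over-counts. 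The paper is more careful here: it only states the containment $\cand{S(Y)}\setminus\cand{S(X)}\subseteq N(v)$ and leaves implicit that each $w\in N(v)$ can be classified in $\order{1}$ time (e.g.\ via an indicator array for $N[S(X)]$, so that $w\notin N[S(X)]$ is the degree-one case and $w\in N(S(X))$ reduces either to ``already rejected'' or to the distance test). Incidentally, the displayed pseudocode for \UpdateCand{} in Algorithm~\ref{algo:elg} has a bug of exactly this flavour---it initializes to $N(v)\cup\cand{S}$ and then only ever \emph{adds} vertices---so reading correctness straight off the code is not safe here.
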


\begin{proof}
  From Lemma~\ref{lem:dist}, 
  vertex $u$ in $\cand{S(X)}$ belongs to $\cand{S(Y)}$ if and only if 
  $\DistM[uv]{S(X)} + \TadM[uv]{S(X)} \ge k$. 
  This can be done in constant time. 
  In addition, from the connectivity of $G[S(Y)]$, 
  $\cand{S(Y)} \setminus \cand{S(X)} \subseteq N(v)$. 
  Thus, we can find $\cand{S(Y)} \setminus \cand{S(X)}$ in 
  $\order{\size{\cand{S(X)}} + \size{N(v)}}$ time. 
\end{proof}

Next, we consider how to update the values of $\DistM{S(Y)}$ and $\TadM{S(Y)}$  when adding $v$ to $S(X)$. 
We can update the old distances to the ones after adding $v$ as in the Floyd-Warshall algorithm (see Algorithm~\ref{algo:elg}), meaning that we can compute $\DistM{S(Y)}$ in $\order{\size{S(X)\cup\cand{S(X)}} \cdot \size{\cand{S(X)}}}$ time. 
By the following lemma, the values of $\TadM{S(Y)}$
can be updated in $\order{\size{S(Y)}}$ time for each pair of vertices in $\cand{S(Y)}$.

\begin{lemma}
  \label{lem:update1}
  Let $u$ and $w$ be two vertices in $\cand{S(X)}$,  
  $e_0$ be an edge in a $u$-$w$ shortest path in $G[S(X) \cup \set{u, w}]$, 
  and $H = G[S(X) \cup \set{u, w}] \setminus \set{e_0}$. 
  If $N_H(u) = \emptyset$, then $\TadM[uw]{S(X)} = \infty$. 
  Otherwise, $\TadM[uw]{S(X)} = \min_{y \in N_H(u)}\set{\DistM[yw]{S(X)} + 1}$. 
\end{lemma}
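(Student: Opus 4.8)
The plan is to unwind the definition of the second distance: by construction $\TadM[uw]{S(X)}$ is precisely the distance between $u$ and $w$ in the fixed graph $H = G[S(X)\cup\set{u,w}]\setminus\set{e_0}$, so the statement is entirely a claim about shortest $u$-$w$ paths in $H$. I would analyse such a path by its first edge, which is necessarily incident to $u$, and reduce the remainder of the path to a distance that is already recorded in $\DistM{S(X)}$. Throughout I write $d_H(\cdot,\cdot)$ for graph distance in a graph $H$, and I use that $u$ and $w$ are distinct (an edge on a shortest $u$-$w$ path exists only when $u\neq w$) and that $u\notin S(X)$, since $u\in\cand{S(X)}$.

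First I would dispose of the degenerate case: if $N_H(u)=\emptyset$ then no edge of $H$ meets $u$, hence there is no $u$-$w$ walk in $H$ and $d_H(u,w)=\infty$, which is the asserted value of $\TadM[uw]{S(X)}$. In the main case, every $u$-$w$ path $P$ in $H$ has a second vertex $y\in N_H(u)$, and deleting $u$ from $P$ leaves a $y$-$w$ path inside $H\setminus\set{u}$; conversely, prepending the edge $\set{u,y}$ to any $y$-$w$ path in $H\setminus\set{u}$ produces a $u$-$w$ path in $H$ exactly one edge longer. Applying this to shortest paths on both sides yields
\[
  \TadM[uw]{S(X)} \;=\; d_H(u,w) \;=\; \min_{y\in N_H(u)}\bigl(1 + d_{H\setminus\set{u}}(y,w)\bigr),
\]
with the convention $\min\emptyset=\infty$ covering the subcase where no such $y$ reaches $w$ (equivalently, where the distances on the right are all infinite).

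It then remains to identify $H\setminus\set{u}$ with a subgraph in which $d(y,w)$ equals $\DistM[yw]{S(X)}$. Since $e_0$ is incident to $u$, removing $u$ already removes $e_0$, so $H\setminus\set{u} = G[S(X)\cup\set{u,w}]\setminus\set{u} = G[S(X)\cup\set{w}]$ (here $u\neq w$ and $u\notin S(X)$ are used). Each $y\in N_H(u)$ is a neighbour of $u$ in $G[S(X)\cup\set{u,w}]$, hence lies in $S(X)\cup\set{w}$; and $y\neq w$, because $\set{u,w}\in E(G)$ would make the single edge $\set{u,w}$ the unique shortest $u$-$w$ path and thus equal to $e_0$, contradicting $\set{u,y}\in E(H)$ together with $e_0\notin E(H)$. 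Hence $y\in S(X)$, so $G[S(X)\cup\set{y,w}] = G[S(X)\cup\set{w}] = H\setminus\set{u}$ and therefore $d_{H\setminus\set{u}}(y,w) = \DistM[yw]{S(X)}$; substituting into the display above gives $\TadM[uw]{S(X)} = \min_{y\in N_H(u)}\set{\DistM[yw]{S(X)} + 1}$. The argument carries no real depth: the only delicate points are this chain of subgraph identifications and the small argument ruling out $y=w$ so that $\DistM[yw]{S(X)}$ is exactly the quantity its definition names, and I expect the subgraph bookkeeping to be the fiddliest part.
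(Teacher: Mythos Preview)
Your proposal is correct and follows essentially the same approach as the paper: both arguments decompose a shortest $u$--$w$ path in $H$ according to its first edge $\{u,y\}$ and use that, since $u\notin S(X)$, the remainder is a $y$--$w$ path in $G[S(X)\cup\{w\}]$ whose length is $\DistM[yw]{S(X)}$. Your write-up is more careful than the paper's---you spell out the identification $H\setminus\{u\}=G[S(X)\cup\{w\}]$ and explicitly rule out $y=w$ so that $\DistM[yw]{S(X)}$ refers to the intended subgraph---whereas the paper phrases the same idea by looking, for each $y$, at the auxiliary graph $G[S(X)\cup\{w\}]\cup\{\,\{u,y\}\,\}$ and noting that every $u$--$w$ path there must use $\{u,y\}$; the two formulations are equivalent.
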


\begin{proof}
    From the definition of $\TadM[uw]{S(X)}$, 
    if $N_H(u) = \emptyset$, 
    then $\TadM[uw]{S(X)} = \infty$. 
    We assume $\size{N_H(u)} \ge 1$. 
    Since $u \notin S(X)$, every shortest path between $u$ and $w$ in $G[S(X) \cup \set{w}] \cup {f}$ 
    contains $f$, where $f = \set{u, y}$. 
    Hence, 
    $\DistM[yw]{S(X)} + 1$ is equal to the distance between $u$ and $w$ 
    in $G[S(X) \cup \set{w}] \cup \set{f}$. 
    Hence, the statement holds. 
\end{proof}
The next lemma implies that 
if $\DistM[uw]{S(X)} + \TadM[uw]{S(X)} < k$, 
i.e., $G[S(X) \cup \set{u, w}]$ is not a solution, 
then computing $\TadM[uw]{S(Y)}$ takes constant time.

\begin{figure}[t]
    \centering
    \includegraphics[width=0.7\textwidth]{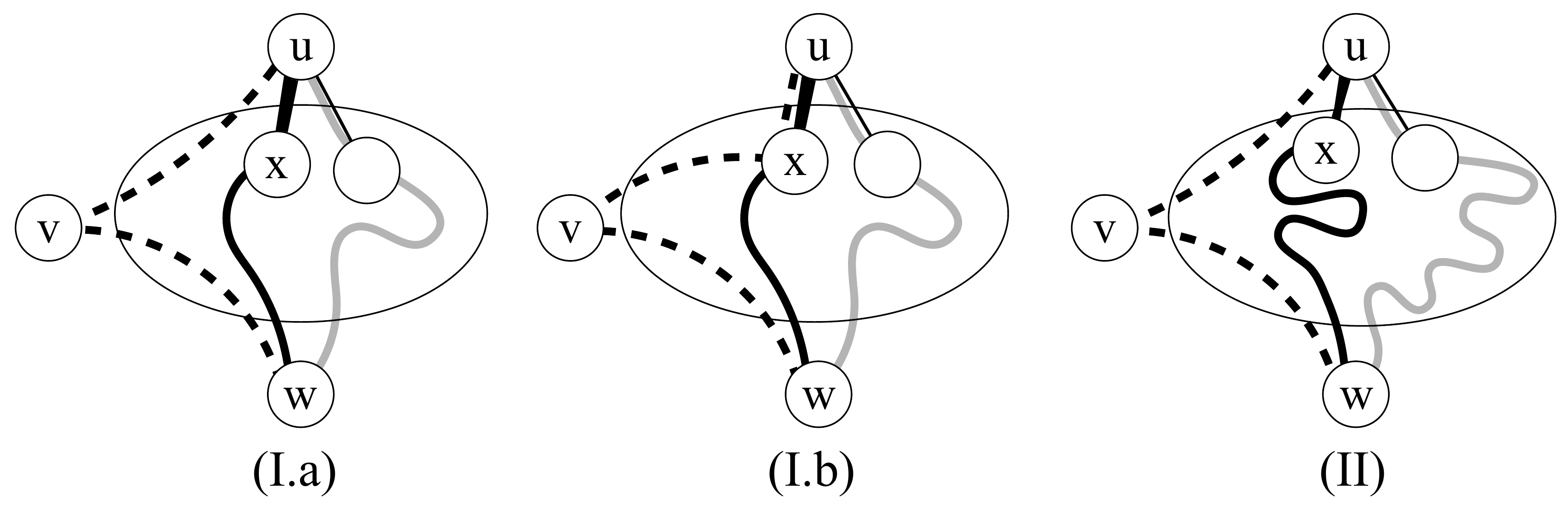}
    \caption{Examples of each case in Lemma~\ref{lem:update2}. 
    Solid lines are $u$-$v$ shortest paths in $G[S(X) \cup \set{u, w}]$. 
    Gray solid lines are $u$-$v$ second shortest paths in $G[S(X) \cup \set{u, w}]$. 
    Dashed lines are $u$-$v$-$w$ shortest paths in $G[S(Y) \cup \set{u, w}]$. 
    Let $\set{u, x}$ be the first edge in a shortest path: the sum of lengths of a solid and gray solid line is less than $k$. 
    }
    \label{fig:all_case}
\end{figure}

\begin{lemma}
    \label{lem:update2}  
    Let $u$ and $w$ be two vertices in $\cand{S(Y)}$. 
    If $p_1 + p_3 < k$, 
    then $\TadM[uw]{S(Y)} = \min\set{\max\set{p_1, p_2}, p_3}$, 
    where $p_1 = \DistM[uw]{S(X)}$, 
    $p_2 = \DistM[uvw]{S(Y)}$, 
    and $p_3 = \TadM[uw]{S(X)}$. 
\end{lemma}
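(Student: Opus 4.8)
The plan is to evaluate $\TadM[uw]{S(Y)}$ straight from its definition and show it equals $m:=\min\set{\max\set{p_1,p_2},p_3}$, which — because $p_1=\DistM[uw]{S(X)}\le\TadM[uw]{S(X)}=p_3$ always holds — is simply the second smallest of $p_1,p_2,p_3$. First I would record what the hypothesis gives: $p_1+p_3<k$ forces $p_1,p_3$ to be finite, so $u,w\in\cand{S(X)}$ (they lie in $\cand{S(Y)}$ by assumption), and equivalently $G[S(X)\cup\set{u,w}]$ is not a solution (Lemma~\ref{lem:dist}). Write $G':=G[S(Y)\cup\set{u,w}]$ and $G'':=G[S(X)\cup\set{u,w}]$. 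Every $u$-$w$ path of $G'$ either avoids $v$ — an \emph{old} path, which lives in $G''$ and so has length $\ge p_1$ — or uses $v$ — a \emph{new} path, of length $\ge p_2$. The shortest old path avoiding the first edge of a shortest old path has length exactly $p_3$, and, as in the $\DistM{S(Y)}$ update, $\DistM[uw]{S(Y)}=\min\set{p_1,p_2}$.

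Fix a shortest $u$-$w$ path $P^*$ of $G'$ with first edge $e_0^*=\set{u,y_0}$; then $\TadM[uw]{S(Y)}$ is the length of a shortest $u$-$w$ path of $G'\setminus\set{e_0^*}$, a value that does not depend on the choice of $P^*$ — if two shortest $u$-$w$ paths had different first edges, deleting either leaves the other, so $\TadM[uw]{S(Y)}=\DistM[uw]{S(Y)}$, a harmless degenerate situation which a short check (essentially the cycle-counting used below) dispatches, since it forces $p_1=p_2$ or $p_1=p_3$ and hence $\DistM[uw]{S(Y)}=m$. Otherwise all shortest $u$-$w$ paths of $G'$ share $e_0^*$; when $p_1\le p_2$ I take $P^*$ to be old, so $y_0\in S(X)\cup\set{w}$.

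For $\TadM[uw]{S(Y)}\le m$ I would, in each regime, produce a $u$-$w$ path of $G'$ avoiding $e_0^*$ of the right length. If $p_2<p_1$ (so $m=p_1$): the shortest old path $P_1$ has length $p_1$, and I claim it avoids $e_0^*$ — if not, $P_1$ and $P^*$ both start with $e_0^*$ and then run to $w$, one through $v$ and one not, so from their last common vertex to their first reconvergence they bound a cycle that misses $u$, lies in $G[S(Y)]$ or in $G[S(Y)\cup\set{w}]$, and has length $\le p_1+p_2-2<k$ (using $p_2<p_1\le p_3$ and $p_1+p_3<k$), which is impossible since $g(G[S(Y)])\ge k$ and $w\in\cand{S(Y)}$. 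If $p_1\le p_2$: the $p_3$-path already gives $\TadM[uw]{S(Y)}\le p_3$, and when in addition $p_2<p_3$ (so $m=p_2$) every shortest new path must avoid $e_0^*$ — otherwise it and $P^*$ bound, in the same way, a cycle missing $u$ of length $\le p_1+p_2-2<k$ (now using $p_2<p_3$), again impossible — so $\TadM[uw]{S(Y)}\le p_2$.

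For $\TadM[uw]{S(Y)}\ge m$, let $Q^*$ be a shortest $u$-$w$ path of $G'\setminus\set{e_0^*}$. If $Q^*$ is old it avoids the first edge $e_0^*$ of the shortest old path $P^*$, so $|Q^*|\ge\TadM[uw]{S(X)}=p_3\ge m$ (or $\ge p_1=m$ when $p_2<p_1$). If $Q^*$ is new and $p_1\le p_2$, then $|Q^*|\ge p_2\ge m$. The one real case is $Q^*$ new with $p_2<p_1$: then $P^*$ is also new (old paths are longer than $p_2$), so $Q^*$ and $P^*$ are two new $u$-$w$ paths with distinct first edges, both through $v$, and $v$ lies strictly between $u$ and $w$ on $P^*$; tracing $P^*$ from $u$ to its first return onto $Q^*$ yields a cycle through $u$ that avoids $w$, hence lies in $G[S(Y)\cup\set{u}]$ and has length $\ge k$ since $u\in\cand{S(Y)}$, whence $p_2+|Q^*|\ge k>p_1+p_3\ge p_1+p_2$ and $|Q^*|>p_1=m$. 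I expect the crux to be exactly these "two paths with a common endpoint" arguments: one must pin down which of $u$, $w$ (and, when $v$ is forced onto one branch, $v$) the extracted cycle is guaranteed to avoid, because that determines which fact — $u\in\cand{S(Y)}$, $w\in\cand{S(Y)}$, or $g(G[S(Y)])\ge k$ — may legitimately be combined with $p_1+p_3<k$ to finish.
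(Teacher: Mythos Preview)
Your proof is correct and follows essentially the same approach as the paper's: a case split on the relative order of $p_1$ and $p_2$, combined with ``two paths sharing an edge at $u$ bound a short cycle avoiding $u$'' arguments that are ruled out by $u,w\in\cand{S(Y)}$ together with $p_1+p_3<k$. Your organization into separate upper and lower bounds (and your explicit treatment of the degenerate case where two shortest paths have different first edges, and of why the old $p_1$-path must avoid $e_0^*$ when $p_2<p_1$) is somewhat more careful than the paper's direct case computation, but the underlying ideas coincide.
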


\begin{proof}
    Let $G_X = G[S(X) \cup \set{u, w}]$ and $G_Y = G[S(Y) \cup \set{u, w}]$.  
    Note that $p_1 \le p_3$. 
    We consider the following cases: 
    (I) $p_1 < p_2$:
    Let $e = \set{u, x}$ be the first edge of a $u$-$w$ shortest path $P$ in $G_Y$. 
    Note that $P$ cannot contain $v$. 
    (I.a) There exists a $u$-$v$-$w$ shortest path $Q$ that does not contain $e$:  
    clearly, $\TadM[uw]{S(Y)} = \min\set{\size{Q} = p_2, p_3}$.  
    (I.b) Every $u$-$v$-$w$ shortest path $Q$ contains $e$:  
    there always exists a cycle $C$ in $S(Y) \cup\set{w}$ such that $V(C) \subseteq (V(P) \cup V(Q))\setminus \set{u}$ and $C$ does not contain $u$. 
    Note that $\size{C} < p_1 + p_2$. 
    If $p_2 \le p_3$, 
    then this contradicts $w \in \cand{S(Y)}$ since $\size{C} < k$. 
    Thus, $p_2 > p_3$. 
    This implies that $\size{Q} - 1 \ge p_3$. 
    Hence, $\TadM[uw]{S(Y)} = p_3$. 
    (II) $p_2 \le p_1$:
    this assumption implies that there exists a $u$-$w$ shortest path $P$ in $G_Y$ that contains $v$, and $p_1 + p_2 < k$. 
    Let $e$ be the first edge of $P$ in $G_Y$  
    and $Q$ be a $u$-$v$-$w$ shortest path in $G_Y \setminus \set{e}$. 
    Now, we can see 
    $\size{Q} > p_1$ since if $\size{Q} \le p_1$, 
    then $u \notin \cand{S(Y)}$ since $P$ and $Q$ make a cycle $C$ containing $u$ with $\size{C} < k$. 
    Thus, the length of a $u$-$w$ shortest path in $G_Y\setminus\set{e}$ is $p_1$, 
    and $\TadM[uw]{S(Y)} = p_1$ holds. 
\end{proof}



Algorithm~\ref{algo:elg} shows in detail the update of the candidate set, 
$\DistM{\cdot}$, and $\TadM{\cdot}$ (done using Lemma~\ref{lem:update2}).
%
We analyze the time complexity of \ELGIS. 
Let $ch(X)$ be the set of children of $X$ and
$\#gch(X)$ be the number of grandchildren of $X$. 
The next lemma shows the time complexity for updating $\TadM{S(X)}$. 

\begin{lemma}
  \label{lem:comp}
  We can compute $\TadM{S(Y)}$ from $\TadM{S(X)}$ in 
  $\order{\#gch(Y)\cdot\size{S(Y)} + \size{\cand{S(Y)}}^2}$ time. 
\end{lemma}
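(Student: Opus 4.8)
The plan is to bound the running time of \UpdateTwo in Algorithm~\ref{algo:elg}, the routine that builds $\TadM{S(Y)}$. \UpdateTwo runs a double loop over all ordered pairs $(u,w)$ with $u,w\in\cand{S(Y)}$, so I split its cost into a cheap and an expensive part. For every pair the algorithm first computes $p_1$ and $p_2$ as the two smallest of the three numbers $\DistM[uw]{S(X)}$, $\DistM[uvw]{S(Y)}$, $\TadM[uw]{S(X)}$; since $\DistM{S(Y)}$ has already been constructed by \UpdateOne and $\TadM{S(X)}$ is an input, this is $\order{1}$ per pair, hence $\order{\size{\cand{S(Y)}}^2}$ in total, which accounts for the second term of the bound. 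It remains to bound the cost of the pairs that enter the conditional branch, where \UpdateTwo additionally scans the values $\DistM[xw]{S(Y)}+1$ over $x\in N(u)\cap S(Y)$; this costs $\order{\size{N(u)\cap S(Y)}}=\order{\size{S(Y)}}$ per such \emph{heavy} pair, so it suffices to show that there are $\order{\#gch(Y)}$ heavy pairs.

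To do so I would show that a heavy pair $(u,w)$ certifies that $G[S(Y)\cup\set{u,w}]$ is a solution. First, $p_1=\DistM[uw]{S(Y)}$: a shortest $u$--$w$ path in $G[S(Y)\cup\set{u,w}]$ either avoids $v$, and then lives in $G[S(X)\cup\set{u,w}]$ and has length $\DistM[uw]{S(X)}$, or uses $v$, and then has length $\DistM[uvw]{S(Y)}$; the minimum of these (equivalently, of all three numbers, as $\DistM[uw]{S(X)}\le\TadM[uw]{S(X)}$) is $\DistM[uw]{S(Y)}$. Second, $p_2\le\TadM[uw]{S(Y)}$; I would prove this by a short case analysis according to whether a shortest and a second--shortest $u$--$w$ path of $G[S(Y)\cup\set{u,w}]$ pass through $v$, using the definition of the second distance and the fact that $u,w$, and $v$ are candidates (so there is no cycle shorter than $k$ through any one of them alone). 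Feeding both facts into Lemma~\ref{lem:dist} applied to $S(Y)$, the triggering test $p_1+p_2\ge k$ gives $\DistM[uw]{S(Y)}+\TadM[uw]{S(Y)}\ge k$, i.e.\ $g(G[S(Y)\cup\set{u,w}])\ge k$; since $u,w\in\cand{S(Y)}$, the graph $G[S(Y)\cup\set{u,w}]$ is also connected, hence a solution.

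Finally I would realize this solution as a grandchild of $Y$. In the enumeration loop at $Y$ let $u'$ be whichever of $u,w$ is processed first; at that moment the other vertex $w'$ has not yet been removed, so it survives in the graph handed to the child iteration on $S(Y)\cup\set{u'}$, and since $G[S(Y)\cup\set{u',w'}]$ is a connected solution we get $w'\in\cand{S(Y)\cup\set{u'}}$. Thus $S(Y)\cup\set{u,w}$ is the vertex set of a child of $S(Y)\cup\set{u'}$, i.e.\ of a grandchild of $Y$. By Lemma~\ref{lem:no:duplication} distinct grandchildren have distinct vertex sets, and a given vertex set $S(Y)\cup\set{u,w}$ is produced by at most the two ordered pairs $(u,w)$ and $(w,u)$, so there are at most $2\,\#gch(Y)$ heavy pairs; their total cost is $\order{\#gch(Y)\cdot\size{S(Y)}}$, and adding the cheap part gives the claimed $\order{\#gch(Y)\cdot\size{S(Y)}+\size{\cand{S(Y)}}^2}$. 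I expect the main obstacle to be the middle step --- establishing $p_2\le\TadM[uw]{S(Y)}$, i.e.\ that the triggering test is tight enough --- since that is where the case analysis on whether the relevant paths use the newly added vertex $v$ becomes delicate; the grandchild counting and the bookkeeping around removed vertices are routine once it is in place.
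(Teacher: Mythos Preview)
Your plan coincides with the paper's proof: split the pairs $(u,w)\in\cand{S(Y)}^2$ into a constant-time part and an $O(\size{S(Y)})$ part, and charge the latter to grandchildren of $Y$. The paper partitions on the hypothesis of Lemma~\ref{lem:update2}, i.e.\ on whether $\DistM[uw]{S(X)}+\TadM[uw]{S(X)}\ge k$, invoking Lemma~\ref{lem:update1} in the expensive case and Lemma~\ref{lem:update2} in the cheap one, whereas you partition on the algorithm's literal test $p_1+p_2\ge k$; since $p_1+p_2\le \DistM[uw]{S(X)}+\TadM[uw]{S(X)}$, your expensive set sits inside the paper's, so either split yields the bound. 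The paper then simply asserts, without justification, that the expensive pairs are at most $\#gch(Y)$; your explicit mapping of each heavy pair to a grandchild via ``$G[S(Y)\cup\{u,w\}]$ is a solution'' is exactly the detail the paper omits.

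One remark on the step you flag as delicate: the inequality $p_2\le\TadM[uw]{S(Y)}$ you aim for may be stronger than needed (and is awkward because $\TadM{\cdot}$ depends on a choice of first edge). It is cleaner to argue the contrapositive directly: if $g(G[S(Y)\cup\{u,w\}])<k$, a shortest offending cycle contains both $u$ and $w$ (as $u,w\in\cand{S(Y)}$) and hence decomposes into two $u$--$w$ paths, at most one of which passes through $v$. If neither does, the cycle lies in $G[S(X)\cup\{u,w\}]$ and one gets $\DistM[uw]{S(X)}+\TadM[uw]{S(X)}<k$; if exactly one does, its length is at least $\DistM[uvw]{S(Y)}$ and the other's at least $\DistM[uw]{S(X)}$. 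Either way two of the three values sum to less than $k$, so $p_1+p_2<k$. This sidesteps the first-edge case analysis entirely.
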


\begin{proof}
  Let $u$ and $w$ be two vertices in  $\cand{S(Y)}$. 
  Two cases are possible:

  \noindent(I)  $\DistM[uw]{S(X)} + \TadM[uw]{S(X)} \ge k$:  
  By Lemma~\ref{lem:update1}, 
  computing $\TadM[uw]{S(Y)}$ takes $\order{\size{S(Y)}}$ time, checking only vertices in $S(Y)$. 
  As the number of pairs $(u, w)$ that fit this case is bounded by $\#gch(Y)$, \ELGIS needs 
  $\order{\#gch(Y)\cdot\size{S(Y)}}$ time to compute this part.
  (II) $\DistM[uw]{S(X)} + \TadM[uw]{S(X)} < k$:  
  From Lemma~\ref{lem:update2}, 
  computing $\TadM[uw]{S(Y)}$ takes constant time, for a total complexity of
  $\order{\size{\cand{S(Y)}}^2}$, which proves the statement. 
\end{proof}

\begin{theorem}
  \label{theo:fast}
  \ELGIS enumerates all solutions in
  $\order{\sum_{S \in \sig S}\size{N[S]}}$ time using
  $\order{\max_{S \in \sig S}\set{\size{N[S]}^3}}$ space, where $\sig S$ is the set of all solutions.
\end{theorem}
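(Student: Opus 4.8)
The plan is to bound the total running time by charging the cost of each iteration to the solutions it "sees" nearby in the enumeration tree, and to establish the space bound separately. Recall that every iteration $X$ outputs exactly one solution $S(X)$, so the number of iterations equals $|\sig S|$, and it suffices to bound the total work summed over iterations. For a single iteration $Y$ with parent $X$, the work consists of: (i) computing $\cand{S(Y)}$ via \UpdateCand, which by Lemma~\ref{lem:candX1} costs $\order{\size{\cand{S(X)}} + \size{N(v)}}$; (ii) updating $\DistM{S(Y)}$ via \UpdateOne, costing $\order{\size{S(X)\cup\cand{S(X)}}\cdot\size{\cand{S(X)}}}$ as noted after Lemma~\ref{lem:candX1}; and (iii) updating $\TadM{S(Y)}$ via \UpdateTwo, costing $\order{\#gch(Y)\cdot\size{S(Y)} + \size{\cand{S(Y)}}^2}$ by Lemma~\ref{lem:comp}.

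First I would observe that for any iteration $Z$, both $S(Z)$ and $\cand{S(Z)}$ are contained in $N[S(Z)]$: indeed $S(Z)$ is a connected solution, and every candidate vertex is adjacent to $S(Z)$ (by the connectivity requirement in the definition of $\cand{S(Z)}$), so $\cand{S(Z)}\subseteq N(S(Z))$. Hence $\size{S(Z)}, \size{\cand{S(Z)}}, \size{N(v)}$ for $v\in\cand{S(Z)}$ are all $\order{\size{N[S(Z)]}}$. This collapses terms (i) and (ii) to $\order{\size{N[S(X)]}^2}$ and the $\size{\cand{S(Y)}}^2$ part of (iii) to $\order{\size{N[S(Y)]}^2}$. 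Since $S(X)\subsetneq S(Y)$ and adding one vertex can only enlarge the closed neighborhood by a bounded amount — more simply, $\size{N[S(X)]}\le\size{N[S(Y)]}$ is false in general, but $S(X)\subseteq N[S(Y)]$ does hold, and in any case both are $\order{\size{N[S(Y)]}}$ once we note $N[S(X)]\subseteq N[S(Y)]\cup N[\{v\}]$ — I would bound each of these $\order{n^2}$-type contributions by $\order{\size{N[S(Y)]}^2}$ and note $\size{N[S(Y)]}^2 = \order{\size{N[S(Y)]}\cdot n}$; actually the cleaner route is to keep it as is and handle the summation directly.

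The key combinatorial step is the $\#gch(Y)\cdot\size{S(Y)}$ term, which does not obviously fit a per-solution budget. Here I would use a charging argument: each grandchild $W$ of $Y$ is itself an iteration outputting a distinct solution $S(W)$ with $S(Y)\subseteq S(W)$, so $\size{S(Y)}\le\size{S(W)}\le\size{N[S(W)]}$, and I charge the $\order{\size{S(Y)}}$ cost of the pair witnessing $W$ to the solution $S(W)$. Since $Y$ is the unique grandparent of $W$, each solution $S(W)$ is charged $\order{1}$ times in this way, for a total contribution of $\order{\sum_{W}\size{N[S(W)]}} = \order{\sum_{S\in\sig S}\size{N[S]}}$. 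Summing this with the $\order{\size{N[S(Y)]}^2}$ contributions — which I bound as $\order{\size{N[S(Y)]}\cdot\size{N[S(Y)]}}$ and absorb into $\order{\sum_{S\in\sig S}\size{N[S]}}$ after the standard observation that $\size{N[S]} = \order{\size{N[S]}}$ suffices because the stated bound is $\order{\sum_S\size{N[S]}}$ treating $\size{N[S]}$ as the relevant "$n$" for that solution (concretely, each iteration's polynomial-in-$\size{N[S]}$ work is $\order{\size{N[S]}}$ amortized once grandchild-charging removes the problematic term, OR one simply states the bound with the square factor folded in) — I obtain the claimed time bound. I would double-check whether the intended reading of $\order{\sum_{S\in\sig S}\size{N[S]}}$ already implicitly allows the quadratic local cost; if so, the argument is exactly: local cost at $Y$ is $\order{\size{N[S(Y)]}^2 + \#gch(Y)\size{S(Y)}}$, the first summed over all $Y$ is $\order{\sum_S\size{N[S]}^2}$ and the second is $\order{\sum_S\size{N[S]}}$ by the charging above.

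For the space bound: at any point during the recursion we store, along the current root-to-leaf path of the enumeration tree, for each iteration $Z$ on that path the arrays $\cand{S(Z)}$, $\DistM{S(Z)}$, $\TadM{S(Z)}$. Each array has size $\order{\size{N[S(Z)]}^2}$, and the path has length at most $\max_{S}\size{S} \le \max_S\size{N[S]}$, giving $\order{\max_S\size{N[S]}^3}$ in total. The main obstacle is the $\#gch(Y)\cdot\size{S(Y)}$ term in the time analysis — making precise that grandchild-charging assigns $\order{1}$ charge per solution and that the resulting sum telescopes into $\order{\sum_{S\in\sig S}\size{N[S]}}$ rather than something larger; everything else is a routine substitution of $\size{N[S]}$ for the various subgraph-size parameters.
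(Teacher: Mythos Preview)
Your handling of the $\#gch(Y)\cdot\size{S(Y)}$ term by charging each grandchild is exactly right, and matches the paper. The genuine gap is in the quadratic term: you end up with $\order{\size{N[S(Y)]}^2}$ per iteration and then openly hesitate, suggesting the total might be $\order{\sum_S\size{N[S]}^2}$ or that the statement ``implicitly allows'' a square. It does not; the theorem really claims $\order{\sum_S\size{N[S]}}$, and your argument as written does not reach that.

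The missing observation is that one of the two factors in the quadratic term is not an arbitrary $\size{N[S(Y)]}$ but specifically $\size{\cand{S(Y)}}$, which equals the number of children of $Y$ in the enumeration tree. Concretely, the costs of \UpdateCand, \UpdateOne, and the $\size{\cand{S(Y)}}^2$ part of \UpdateTwo are all $\order{\size{N[S(Y)]}\cdot\size{\cand{S(Y)}}}$. So distribute $\order{\size{N[S(Y)]}}$ to each of the $\size{\cand{S(Y)}}$ children of $Y$, exactly as you distributed the other term to grandchildren. Every iteration $Z$ then receives at most $\order{\size{N[S(\text{parent}(Z))]}}$ from its parent and $\order{\size{N[S(\text{grandparent}(Z))]}}$ from its grandparent; since closed neighborhoods are monotone in the vertex set, both are $\order{\size{N[S(Z)]}}$, and summing gives $\order{\sum_S\size{N[S]}}$.

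One small correction: you assert ``$\size{N[S(X)]}\le\size{N[S(Y)]}$ is false in general,'' but it is in fact true. Because $S(X)\subseteq S(Y)$ we have $N[S(X)]\subseteq N[S(Y)]$, and this monotonicity is precisely what makes the parent/grandparent charging land within the child's own budget. Your space argument is fine and matches the paper.
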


\begin{proof}
  The correctness of \ELGIS follows from Lemma~\ref{lem:enum}. 
  We first consider the space complexity. 
  In an iteration $X$, 
  \ELGIS uses $\order{\size{\cand{S(X)} \cup S(X)}^2}$ 
  space for storing values of $\DistM{\cdot}$ and $\TadM{\cdot}$. 
  In addition, the height of $\sig T$ is at most $\max_{S \in \sig S}\set{\size{S}}$.
  Therefore, \ELGIS uses $\order{\max_{S \in \sig S}\set{\size{N[S]}^3}}$ space.

  Let 
  $c(X)$ be $\size{\cand{S(X)}}$ and  
  $T(X, Y)$ be the time needed to generate $Y$ from $X$, i.e., an execution of $\NextC()$ (Algorithm~\ref{algo:elg}). 
  From Lemma~\ref{lem:candX1}, Lemma~\ref{lem:update1}, and the Floyd-Warshall algorithm, 
  $T(X, Y)$ is 
  $\order{c(X) + \size{N(v)} + c(Y)\cdot\size{S(X)} + \#gch(Y)\cdot\size{S(Y)} + c(Y)^2}$ time. 
  In addition, 
  $\size{N[S(X)]} \le \size{N[S(Y)]}$, 
  $\size{N(v)} = \order{\size{N[S(Y)]}}$, and 
  $c(X) = \order{N[S(X)]}$ since every vertex in the candidate set 
  has a neighbor in $S(X)$. 
  Thus, 
  $T(X, Y) = \order{\size{N[S(Y)]}(c(Y) + \#gch(Y))}$ time. 
  Note that the sum of children and grandchildren for all iterations is at most $2\size{\sig V}$. 
  Thus, 
  by distributing the $\order{\size{N[S(Y)]}}$ time from $X$ to children and grandchildren of $Y$, 
  each iteration needs $\order{\size{N[S(Y)]}}$ time since each iteration receives costs only from the parent and the grandparent. 
  In addition, 
  each iteration outputs a solution, and hence
  the total time is 
  $\order{ \sum_{S \in \sig S}\size{N[S]}}$. 
\end{proof}

\section{Subgraph enumeration}
We propose an algorithm, \ELGS, for enumerating all subgraphs with girth $k$ in a given graph $G$, detailed in Algorithm~\ref{algo:elgs}.
A trivial adaptation of \ELGIS would run in $\order{m}$ time per solution,
as the candidate sets are sets of edges, whose size is $\order{m}$.
To improve this running time, 
\ELGS selects candidates in a certain order, so that the number of candidate edges does not exceed no more than the number of nodes in the previous solution $G[S]$. 

\begin{figure}[t]
    \centering
    \includegraphics[width=0.7\textwidth]{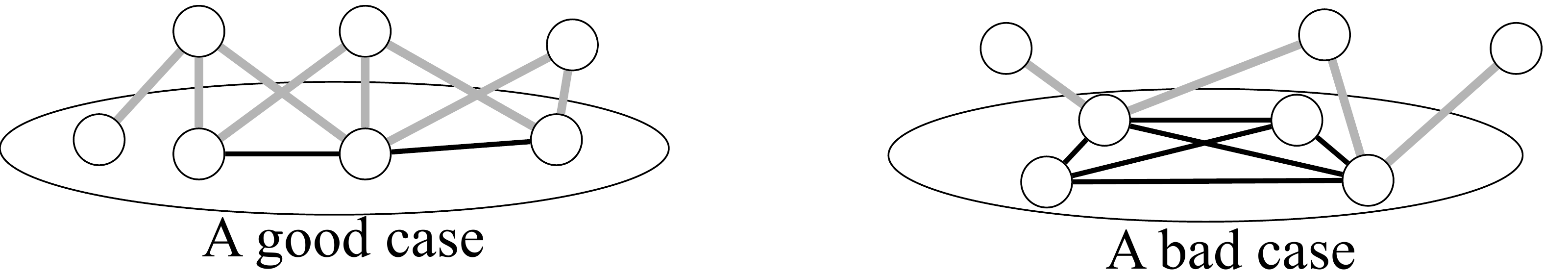}
    \caption{Black solid lines and gray solid lines represent inner edges and outer edges, respectively. %
    Our main strategy is to reduce the number of inner edges in \texttt{EBG-S}. }
    \label{fig:inner}
\end{figure}

Let $S$ be the current solution. 
Note that $S$ is an edge set. 
We first define an inner edge and an outer edge as follows: 
an edge $e = \set{u, v}$ is an \name{inner edge} for $S$ if $u, v \in G[S]$, 
and an \name{outer edge} otherwise (see Fig.~\ref{fig:inner}).
Let $\cin{S}$ and $\cout{S}$ be a set of inner edges and outer edges in $\cand{S}$, respectively. 
We first consider the case when \ELGS picks an outer edge. 
In the following lemmas, 
let $X$ be an iteration in enumeration tree $\sig T$, 
$e$ be an edge not in $X$, and 
$Y$ be the child iteration of $X$ satisfying $S(Y) = S(X) \cup \set{e}$.

\begin{lemma}
  \label{lem:out}
  Let $e = \set{x, y}$ be an outer edge such that $x \in V(G[S(X)])$. 
  Then $\cand{S(Y)} \subseteq (\cand{S(X)} \cup E(y)) \setminus \set{e}$, where $E(y)$ are the edges incident to $y$. 
\end{lemma}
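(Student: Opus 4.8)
The statement is an inclusion between two edge sets, so the plan is to pick an arbitrary candidate edge $e' \in \cand{S(Y)}$ and show it lies in $(\cand{S(X)} \cup E(y)) \setminus \set{e}$. That $e' \neq e$ is immediate: candidate edges of an iteration are never edges of its current solution, and $e \in S(Y)$ (for the same reason $e' \notin S(X) \subseteq S(Y)$). Next I split on whether $y$ is an endpoint of $e'$. If it is, then $e' \in E(y)$ and there is nothing more to do. So for the rest assume $e'$ is \emph{not} incident to $y$; the goal is then to verify the three defining conditions of $\cand{S(X)}$ for $e'$: that $e'$ is still available at $X$, that $G[S(X)\cup\set{e'}]$ is connected, and that $g(G[S(X)\cup\set{e'}]) \ge k$.

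The ``availability'' and ``girth'' conditions are routine. For availability, recall that $Y$ is executed on the graph obtained from $X$'s graph by deleting the processed-candidate set \done, so $E(Y) \subseteq E(X)$; hence $e' \in \cand{S(Y)} \subseteq E(Y)$ forces $e' \in E(X)$. For girth, since $S(X)\subseteq S(Y)$, the graph $G[S(X)\cup\set{e'}]$ is a subgraph of $G[S(Y)\cup\set{e'}]$, and the latter has girth at least $k$ because $e'\in\cand{S(Y)}$; as girth cannot decrease when passing to a subgraph, $g(G[S(X)\cup\set{e'}]) \ge k$.

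The crux is connectivity, and this is where the hypotheses are used. Because $e$ is an outer edge and $x \in V(G[S(X)])$, its other endpoint $y$ does \emph{not} lie in $V(G[S(X)])$. Consequently, in the graph $G[S(Y)\cup\set{e'}]$, whose edge set is $S(X)\cup\set{e}\cup\set{e'}$, the only edge incident to $y$ is $e$ itself: it is incident to no edge of $S(X)$ since $y\notin V(G[S(X)])$, and not to $e'$ by our assumption. Thus $y$ is a pendant (degree-one) vertex of $G[S(Y)\cup\set{e'}]$, which is connected since $e'\in\cand{S(Y)}$. Deleting a pendant vertex from a connected graph leaves it connected, and the graph obtained from $G[S(Y)\cup\set{e'}]$ by deleting $y$ together with its unique incident edge $e$ is exactly $G[S(X)\cup\set{e'}]$; hence $G[S(X)\cup\set{e'}]$ is connected. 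Combining the three conditions yields $e'\in\cand{S(X)}$, which establishes the inclusion.

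I expect the connectivity step to be the only real obstacle: one has to spot that $y$ becomes a pendant vertex in the enlarged solution, which hinges precisely on $e$ being an outer edge meeting $S(X)$ only at $x$ and on having restricted attention to candidates $e'$ not incident to $y$. The remaining fine points are harmless and deserve at most a line each: the hypothesis $x\in V(G[S(X)])$ already forces $S(X)\neq\emptyset$, so no degenerate empty-solution case arises, and the \done-pruning only removes edges, so it cannot take an edge out of $\cand{S(X)}$ while keeping it in $\cand{S(Y)}$.
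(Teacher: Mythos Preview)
Your argument is correct and follows the same route as the paper: the paper's proof is the one-line contrapositive ``an edge $g \notin E(y)\cup\cand{S(X)}$ cannot be added to $S(Y)$ as the resulting subgraph would be disconnected, and $e\notin\cand{S(Y)}$ since $e\in S(Y)$.'' Your version is simply the fleshed-out direct form, making explicit the pendant-vertex observation for connectivity and also checking the girth and availability conditions that the paper leaves implicit.
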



\begin{proof}
    An edge $g \notin E(y) \cup \cand{S(X)}$ may not be added to $S(Y)$ as the resulting subgraph would be disconnected, and $e\not\in\cand{S(Y)}$ since $e\in S(Y)$. 
\end{proof}

From Lemma~\ref{lem:out},
\ELGS manages the candidate set $\cand{S(Y)}$ 
in $\order{\size{\cand{S(Y)}} + \size{V(G[S(X)])}}$ time
when \ELGS picks an outer edge $e$ 
since we can add all edges 
$e' \notin S(X) \cup \cand{S(X)}$ incident to $y$ 
and $S(Y) \cup \set{e'}$ is a solution. 
Moreover, removed edges are at most $\size{V(G[S(X)])}$ since 
all removed edges have a vertex in $V(G[S(X)])$. 
In this case, 
\ELGS can obtain $\cin{S(Y)}$ and $\cout{S(Y)}$ 
in $\order{S(X)}$ time and $\order{\cand{S(Y)}}$ time, respectively. 
Next, we consider that when \ELGS picks an inner edge $e$. 
When we pick an inner edge, $\cand{S(Y)}$ is monotonically decreasing.

\begin{lemma}
\label{lem:in}
  If $e$ is an inner edge, 
  then $\cin{S(Y)} \subset \cin{S(X)}$ and 
  $\cout{S(Y)} = \cout{S(X)}$.
\end{lemma}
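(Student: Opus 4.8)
The plan is to exploit the one structural fact that distinguishes an inner edge: since $e=\set{u,v}$ has both endpoints already in $V(G[S(X)])$, adding $e$ introduces no new vertices, so $V(G[S(Y)]) = V(G[S(X)])$. Everything else follows from this, because the inner/outer classification of an edge with respect to a solution $S$ depends only on $V(G[S])$, and the connectivity and girth conditions defining $\cand{S}$ interact cleanly with this common vertex set. I will also note at the outset that the hypothesis ``$e$ is an inner edge'' forces $S(X)\neq\emptyset$ (the graph $G[\emptyset]$ has no vertices and hence no inner edge), so I may assume $G[S(X)]$ is a nonempty connected subgraph throughout.

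For $\cout{S(Y)} = \cout{S(X)}$, I would fix an outer edge $f$. First, $f\neq e$ since $e$ is inner, so $f\in S(X)\iff f\in S(Y)$; and since $V(G[S(X)]) = V(G[S(Y)])$, $f$ is outer for $S(X)$ iff it is outer for $S(Y)$. Next, because $G[S(X)]$ and $G[S(Y)]$ are connected with the same vertex set, $G[S(X)\cup\set{f}]$ is connected iff $G[S(Y)\cup\set{f}]$ is (both reduce to ``$f$ is incident to that common vertex set''). Finally, an outer edge $f$ has an endpoint that is a leaf in $G[S\cup\set{f}]$, so adding it creates no cycle and $g(G[S\cup\set{f}]) = g(G[S])\ge k$ for both $S=S(X)$ and $S=S(Y)$; the girth condition is thus satisfied in both cases or, more precisely, always. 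Combining the three equivalences gives $f\in\cout{S(X)}\iff f\in\cout{S(Y)}$.

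For $\cin{S(Y)}\subsetneq\cin{S(X)}$, the inclusion direction uses monotonicity of girth under subgraphs (already invoked in the proof of Lemma~\ref{lem:enum}): given $f\in\cin{S(Y)}$, both endpoints of $f$ lie in $V(G[S(Y)]) = V(G[S(X)])$, so $f$ is inner for $S(X)$ and $f\notin S(X)$; since $G[S(X)]$ is connected and contains both endpoints of $f$, $G[S(X)\cup\set{f}]$ is connected; and since $G[S(X)\cup\set{f}]$ is a subgraph of $G[S(Y)\cup\set{f}]$ it has girth at least $k$. Hence $f\in\cand{S(X)}$, so $f\in\cin{S(X)}$. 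Strictness is immediate: $e$ is itself an inner candidate of $S(X)$ (it is the edge chosen to build $Y$), yet $e\in S(Y)$, so $e\notin\cand{S(Y)}\supseteq\cin{S(Y)}$. I do not anticipate a real obstacle; the only point needing care is the girth bookkeeping — observing that an outer edge contributes a leaf and thus no cycle, and that for an inner candidate one should pass the girth inequality through the subgraph relation rather than recompute it from scratch.
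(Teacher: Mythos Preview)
Your proposal is correct and follows essentially the same approach as the paper: both hinge on the observation that an inner edge leaves $V(G[S(Y)]) = V(G[S(X)])$ unchanged, then use that outer edges contribute a leaf (hence create no cycle) to get $\cout{S(Y)} = \cout{S(X)}$, and girth monotonicity under subgraphs together with $e\notin\cin{S(Y)}$ for strict inclusion of $\cin{\cdot}$. Your write-up is considerably more explicit than the paper's terse version, but the underlying argument is the same.
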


\begin{proof}
    Since $e$ is an inner edge $V(G[S(Y)]) = V(G[S(X)])$, 
    thus
    there is no edge $f \in \cin{S(Y)} \setminus \cin{S(X)}$. 
    Since $e \notin \cin{S(Y)}$ and 
    no edge in $\cout{S(X)}$ is in $\cin{S(Y)}$,
    $\cin{S(Y)} \subset \cin{S(X)}$.
    Moreover, there is no cycle including $f \in \cout{S(X)}$ 
    in $G[S(Y) \cup \set{f}]$, 
    hence $\cout{S(Y)} = \cout{S(X)}$. 
\end{proof}

\begin{algorithm}[t]
  \caption{Updating data structures in \ELGSname. }
  \label{algo:elgs}
    \Procedure{\NextC{$\cand{S}, \DistMS{S}, S, k, G$}}{
        \lIf{$\cin{S} \neq \emptyset$}{ 
        $e \gets \cin{S}$; 
        \textbf{else} 
        $e \gets \cout{S}$
        }
        $\cand{S \cup \set{e}} \gets$ \UpdateCand{$e, S$}\;
        $\DistMS{S \cup\set{e}} \gets$ \UpdateThree{$e, \cand{S\cup\set{e}}$}\;
    }
    \Fn{\UpdateCand{$e = \set{u, v}, S$}}{
        \uIf{$e \in \cin{S}$}{       
            \For{$f \in \cin{S} \setminus \set{e}$}{
                \lIf{$g(G[S \cup \set{e, f}]) \ge k$}{
                    $\cin{S} \gets \cin{S} \cup \set{f}$
                }
            }
        }\Else(\tcp*[f]{We assume $u \in G[S]$ and $v \notin G[S]$}){
            \For(\tcp*[f]{Let $f$ be an edge $\set{v, w}$}){$w \in N(v)$}{
                \lIf{$g(G[S \cup \set{e, f}]) < k$}{
                    $\cout{S} \gets \cout{S} \setminus f$
                }\ElseIf{$w \in G[S]$}{
                    $(\cin{S}, \cout{S}) \gets (\cin{S} \cup f, \cout{S} \setminus f)$
                }\lElse{
                    $\cout{S} \gets \cout{S} \cup f$
                }
            }
        }
        \Return $\cin{S} \cup \cout{S}$\;
    }
    \Fn{\UpdateThree{$e = \set{u, v}, \cand{S\cup\set{e}}$}}{
    $A = \inset{v \in V(G[S])}{v \text{ is incident to } \cand{S}.}$\;
      \For(\tcp*[f]{If $e \in \cout{S}$, then $u \in V(G[S]), v \notin V(G[S])$}){$x, y \in A$}{ 
            
            \If{$e \in \cin{S}$}{
                $\DistMS[xy]{S} \gets \min\set{\DistMS[xy]{S}, \DistMS[xu]{S} + \DistMS[vy]{S} + 1, \DistMS[xv]{S} + \DistMS[uy]{S} + 1}$\;
            }\lElse{
                $\DistMS[xy]{S} \gets \min\set{\DistMS[xy]{S}, \DistMS[xu]{S} + 1}$
            }
        }
    \Return $\DistMS{S}$\;    
    }
\end{algorithm}

Next, 
    for any pair of edges $e$ and $f$ not in $G[S(X)]$, 
we consider the computation of the girth of $G[S(X) \cup \set{e, f}]$ in \ELGS. 
Let $A(X) = \inset{v \in V(G[S(X)])}{E(v) \cap \cand{S(X)} \neq \emptyset}$. 
In a similar fashion as \ELGIS, 
\ELGS uses $\DistMS{S(X)}$ for $A(X)$. 
The definition of $\DistMS{S(X)}$ is as follows: 
For any pair of vertices $u$ and $v$ in $A(X)$,
$\DistMS[uv]{S(X)}$ is the distance between $u$ and $v$ in $A(X)$. 
Note that a shortest path between $u$ and $v$ may contain a vertex in $G[S] \setminus A(X)$. 
The next lemma shows that 
by using $\DistMS{S(X)}$, 
we can compute
$\cand{S(Y)}$ in $\order{\size{V(G[S(Y)])}}$ time from $\cand{S(X)}$. 

\begin{lemma}
\label{lem:delta}
    For any iteration $X$, 
    $\size{\cin{S(X)}} \le \size{V(G[S(X)])}$. 
\end{lemma}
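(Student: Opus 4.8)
The plan is to prove the bound by induction on $\size{S(X)}$, equivalently on the depth of the iteration $X$ in the enumeration tree $\sig T$ of \ELGS. For the base case I would take the initial iteration $I$: here $S(I)=\emptyset$, so $V(G[\emptyset])$ is empty, no edge can have both endpoints in it, hence $\cin{S(I)}=\emptyset$ and the inequality holds trivially. For the inductive step I would let $X'$ be the parent of $X$ and $e$ the edge processed by $X'$ that produced $X$, so that $S(X)=S(X')\cup\set{e}$, and assume $\size{\cin{S(X')}}\le\size{V(G[S(X')])}$.

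The step splits according to whether $e$ is inner or outer for $S(X')$. If $e$ is inner, then $V(G[S(X)])=V(G[S(X')])$, and by Lemma~\ref{lem:in} — together with the observation that an edge already placed in \done, hence removed from the graph handed to $X$, cannot reappear in the candidate set of $X$ — we get $\cin{S(X)}\subseteq\cin{S(X')}$, so $\size{\cin{S(X)}}\le\size{\cin{S(X')}}\le\size{V(G[S(X')])}=\size{V(G[S(X)])}$. If $e$ is outer, then by the selection rule of \ELGS an outer edge is chosen only once no inner candidate remains, so \emph{every} inner candidate of $S(X')$ has already been processed and therefore removed from the graph handed to $X$. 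The degenerate case $S(X')=\emptyset$ is immediate, since then $S(X)$ is a single edge and $\cin{S(X)}=\emptyset$; otherwise connectivity of $G[S(X)]$ forces $e=\set{x,y}$ with exactly one endpoint, say $x$, in $V(G[S(X')])$, and $y$ fresh, so $\size{V(G[S(X)])}=\size{V(G[S(X')])}+1$. I would then classify an inner candidate $f$ of $S(X)$ by its endpoints: if both lie in $V(G[S(X')])$, then since girth is monotone under taking subgraphs $f$ would already be an inner candidate of $S(X')$, contradicting that all such edges were removed before $X$ was created; hence every inner candidate of $S(X)$ is incident to the new vertex $y$, and there are at most $\size{V(G[S(X')])}$ edges joining $y$ to $V(G[S(X')])$. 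This gives $\size{\cin{S(X)}}\le\size{V(G[S(X')])}<\size{V(G[S(X)])}$, closing the induction.

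The only delicate point, and the step I expect to demand the most care, is the outer-edge case: it leans on the scheduling discipline of \ELGS (no outer candidate is ever selected while an inner candidate is still available) and on the bookkeeping that an edge placed in \done is genuinely absent from the subgraph passed to the newly created child, so it cannot resurface as an inner candidate there. Pinning this down and checking it is consistent with the statements of Lemma~\ref{lem:in} and Lemma~\ref{lem:out} is where the real work lies; the inner-edge case and the count of edges incident to the freshly added vertex are routine.
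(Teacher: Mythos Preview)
Your argument is correct and matches the paper's proof in structure: the paper's proof is a three-line sketch listing exactly the facts you establish --- (A) the base case $\cin{S(I)}=\emptyset$, (B) choosing an inner edge strictly shrinks $\cin{}$ (your inner case, via Lemma~\ref{lem:in}), and (C) an outer edge is chosen only when $\cin{}$ is empty, after which the new inner candidates are all incident to the fresh vertex $y$ and hence number at most $\size{V(G[S(X')])}<\size{V(G[S(X)])}$ (your outer case). Your write-up is more careful than the paper's about the interaction with \done{} and about the degenerate first step, but the underlying induction is the same.
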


\begin{proof}
    The proof follows from these facts:
    (A) Initially, $\cin{S(X)}=\emptyset$. (B) Choosing $e \in \cin{S(X)}$ decreases $|\cin{S(Y)}|$. (C) $e = \{x,y\} \in \cout{S(X)}$ is chosen iff $\size{\cin{S(X)}}=0$, and (assuming wlog $y\not\in V(G[S(X)])$) it increases $|\cin{S(Y)}|$ by at most $\size{\set{ \{y,z\} : z\in V(G[S(X)])}} < \size{V(G[S(X)])}$.
\end{proof}

\begin{lemma}
    \label{lem:move_cout}
    $\size{\cout{S(X)} \setminus \cout{S(Y)}} + \size{\cout{S(Y)} \setminus \cout{S(X)}} \le \size{V(G[S(Y)])}$. 
\end{lemma}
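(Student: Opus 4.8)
The plan is to split on whether the edge $e$ that $X$ picks to form $Y$ is an inner or an outer edge of $S(X)$, as in the other lemmas for \ELGS. If $e$ is an inner edge, Lemma~\ref{lem:in} already gives $\cout{S(Y)} = \cout{S(X)}$, so both difference sets are empty and the inequality holds trivially. The content is therefore the outer-edge case, which I set up by letting $y$ be an endpoint of $e = \set{x,y}$ with $y \notin V(G[S(X)])$ (one exists since $e$ is outer); then $V(G[S(Y)]) = V(G[S(X)]) \cup \set{y}$ and the new vertex $y$ is the only thing that can change an edge's candidate status.

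The first step is to make this precise: I would show that an edge $f$ not incident to $y$ is an outer edge for $S(X)$ iff it is one for $S(Y)$, and lies in $\cand{S(X)}$ iff it lies in $\cand{S(Y)}$. The outer-edge part is immediate because $f$ has exactly one endpoint in $V(G[S(X)]) = V(G[S(Y)]) \setminus \set{y}$. For candidacy, recall that $G[\cdot]$ is edge-induced, so $G[S(X)\cup\set{f}]$ and $G[S(Y)\cup\set{f}] = G[S(X)\cup\set{e,f}]$ differ only by the degree-one vertex $y$ (the free endpoint of $f$ also stays degree one); hence one contains a cycle of length $<k$ iff the other does, and connectivity is likewise unaffected. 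Consequently $\cout{S(X)}\setminus\cout{S(Y)}$ and $\cout{S(Y)}\setminus\cout{S(X)}$ consist entirely of edges incident to $y$.

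The second step classifies the edges at $y$. An edge $\set{y,z}$ lies in $\cand{S(X)}$ only when $z \in V(G[S(X)])$ (otherwise adding it to $S(X)$ disconnects the subgraph); in $S(Y)$ such an edge is either $e$ itself or an inner edge, so it can only \emph{leave} $\cout{\cdot}$, and there are at most $\size{V(G[S(X)])}$ of these, $e$ included. Conversely, by the degree-one argument every edge $\set{y,z}$ with $z \notin V(G[S(Y)])$ is always in $\cand{S(Y)}$ and never in $\cand{S(X)}$, so these edges --- the ones joining $y$ to $V(G)\setminus V(G[S(Y)])$ --- are exactly the ones that \emph{enter} $\cout{\cdot}$. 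So the left-hand side equals the number of neighbours of $y$ inside $V(G[S(X)])$ that span a candidate edge, plus the number of neighbours of $y$ outside $V(G[S(Y)])$.

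I expect the concluding count to be the main obstacle, since a priori it is bounded only by $\size{N(y)}$. Here one must exploit that an outer edge is chosen only when $\cin{S(X)} = \emptyset$, feed this into Lemma~\ref{lem:delta} (which controls how the inner-edge budget $\size{\cin{\cdot}} \le \size{V(G[\cdot])}$ has been spent along the path from $I$ down to $X$) to limit how many neighbours $y$ can still have outside $V(G[S(Y)])$, and then balance the leaving and entering contributions against $\size{V(G[S(Y)])} = \size{V(G[S(X)])} + 1$. Getting that trade-off tight is the delicate point.
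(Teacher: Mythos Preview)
Your case split and the treatment of the inner-edge case match the paper exactly, and your analysis of the outer-edge case is in fact sharper than the paper's: you correctly isolate the changes to edges incident to the new vertex $y$, you correctly bound the \emph{leaving} edges by $\size{V(G[S(X)])}$, and you correctly identify the problematic term, namely the edges $\{y,z\}$ with $z\notin V(G[S(Y)])$, all of which enter $\cout{\cdot}$ and whose number is a priori only $\size{N(y)}$.

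However, the fix you sketch cannot close this gap. The condition $\cin{S(X)}=\emptyset$ together with Lemma~\ref{lem:delta} only controls $\size{\cin{\cdot}}$ along the recursion path; it carries no information about $\size{N(y)\setminus V(G[S(Y)])}$, which depends solely on the ambient graph and not on any budget the algorithm has spent. There is simply no ingredient in the algorithm that limits how many neighbours the new vertex has outside the current solution.

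The paper's own proof does not handle the entering edges either: its sentence ``Now, $w\in V(G[S(Y)])$'' is only valid for edges $\{v,w\}$ that were \emph{already} in $\cout{S(X)}$ (then $w$ must lie in $V(G[S(X)])$), so the argument really bounds just $\size{\cout{S(X)}\setminus\cout{S(Y)}}$. As stated---with the full symmetric difference on the left---the lemma is false. Concretely, take $S(X)=\{\{a,x\}\}$ (so $\cin{S(X)}=\emptyset$) and $e=\{x,y\}$ where $y$ has $d$ neighbours outside $\{a,x,y\}$; then $\size{V(G[S(Y)])}=3$ while $\size{\cout{S(Y)}\setminus\cout{S(X)}}\ge d$. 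So the obstacle you flagged is genuine, not an artefact of your decomposition. The downstream complexity claim can still be salvaged by bounding the update cost by $O(\size{V(G[S(Y)])}+\size{\cand{S(Y)}})$ and amortising the second summand over the children of $Y$, but that is a different statement from the lemma as written.
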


\begin{proof}
    We consider two cases: 
    (I) $\cin{S(X)} \neq \emptyset$:
    \ELGS picks $e \in \cin{S(X)}$, and thus, 
    From Lemma~\ref{lem:in}, $\cout{S(Y)} = \cout{S(X)}$. 
    (II) $\cin{S(X)} = \emptyset$: 
    \ELGS picks $e = \set{u, v} \in \cout{S(X)}$. 
    Without loss of generality, we can assume that $u \in V(G[S(X)])$ and $v \notin V(G[S(X)])$. 
    Let $f$ be an edge $\set{v, w}$ incident to $v$. 
    Now, $w \in V(G[S(Y)])$. 
    This implies that the number of edges that are added to $\cout{S(Y)}$ and removed from $\cout{S(X)}$ 
    is at most $\size{V(G[S(Y)])}$.
\end{proof}

Note that $\size{V(G[S(X)])} \le \size{V(G[S(Y)])}$. 
Hence, from the above lemmas, 
we can obtain the following lemma. 

\begin{lemma}
    \label{lem:comp:cand}
    $\cand{S(Y)}$ can be computed in 
    $\order{\size{V(G[S(Y)])}}$ time from $\cand{S(X)}$. 
\end{lemma}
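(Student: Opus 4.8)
The plan is to assemble Lemma~\ref{lem:comp:cand} directly from the structural lemmas already proved, splitting on whether the chosen candidate edge is inner or outer, exactly as the algorithm does on the first line of \NextC{} in Algorithm~\ref{algo:elgs}. In both cases the goal is to show that all of the work done by \UpdateCand{} — testing each relevant edge for membership in the new candidate set, and sorting edges into the inner/outer partition — costs $\order{\size{V(G[S(Y)])}}$, using the precomputed $\DistMS{S(X)}$ table so that each girth test $g(G[S(X)\cup\set{e,f}]) \ge k$ is answered in constant time (in the same spirit as Lemma~\ref{lem:dist} for \ELGIS; a pair of edges incident to $A(X)$ closes a cycle whose length is governed by the stored distances).

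First I would handle the inner-edge case. Here Lemma~\ref{lem:in} gives $\cin{S(Y)} \subset \cin{S(X)}$ and $\cout{S(Y)} = \cout{S(X)}$, so the only work is to filter $\cin{S(X)}$ down to $\cin{S(Y)}$, i.e. at most $\size{\cin{S(X)}}$ constant-time girth tests, and by Lemma~\ref{lem:delta} $\size{\cin{S(X)}} \le \size{V(G[S(X)])} \le \size{V(G[S(Y)])}$; the outer set is copied unchanged in $\order{1}$ by sharing the reference. Second, the outer-edge case $e=\set{u,v}$ with $v\notin V(G[S(X)])$: by Lemma~\ref{lem:out} the new candidates are confined to $(\cand{S(X)}\cup E(v))\setminus\set{e}$, and the only edges whose status can change are those incident to $v$, of which there are at most $\size{V(G[S(Y)])}$ many that form an inner edge (each such edge is $\set{v,w}$ with $w\in V(G[S(X)])$) — this is precisely the counting in the proof of Lemma~\ref{lem:delta}(C). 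Each such edge gets one constant-time girth test and is then placed into $\cin{S(Y)}$ or $\cout{S(Y)}$ or discarded; meanwhile Lemma~\ref{lem:move_cout} bounds the total number of insertions into and deletions from the outer set by $\size{V(G[S(Y)])}$. Summing, the outer-edge case is also $\order{\size{V(G[S(Y)])}}$.

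The main obstacle I anticipate is not the counting — that is fully delegated to Lemmas~\ref{lem:in}, \ref{lem:delta}, \ref{lem:move_cout} — but making the constant-time girth test rigorous and verifying that $\DistMS{S(X)}$ supplies enough information for it. One must argue that $g(G[S(X)\cup\set{e,f}]) < k$ can be detected by looking only at the distances in $A(X)$ between the endpoints of $e$ and $f$ (plus the $\pm1$ corrections from the new vertex, as in \UpdateThree{}), i.e. that no relevant short cycle is "hidden" through vertices of $G[S(X)]\setminus A(X)$ in a way the table misses. Since $G[S(X)]$ already has girth $\ge k$, any new short cycle must use $e$ or $f$, hence must pass through $A(X)$, so the stored pairwise distances among $A(X)$ vertices do capture it — but this needs to be stated carefully, and it is really the content that parallels Lemma~\ref{lem:dist}. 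Once that is in hand, the proof is just: "by the cited lemmas each candidate test is $\order{1}$ and the number of tests and bookkeeping operations is $\order{\size{V(G[S(Y)])}}$ in each of the two cases, using $\size{V(G[S(X)])}\le\size{V(G[S(Y)])}$; hence the claim."
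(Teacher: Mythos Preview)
Your proposal is correct and follows essentially the same route as the paper: the paper's ``proof'' is simply the sentence ``Note that $\size{V(G[S(X)])}\le\size{V(G[S(Y)])}$. Hence, from the above lemmas, we can obtain the following lemma,'' and you have unpacked exactly that, splitting on inner versus outer and invoking Lemmas~\ref{lem:in}, \ref{lem:delta}, \ref{lem:out}, and \ref{lem:move_cout} precisely as intended. Your identification of the constant-time girth test via $\DistMS{S(X)}$ as the one point needing care is also apt; the paper handles it only by the phrase ``in a similar fashion as \ELGIS,'' so your paragraph justifying it (any new short cycle must use $e$ and hence pass through $A(X)$) is a genuine, and welcome, addition rather than a deviation.
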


\begin{theorem}\label{theo:elgs}
\ELGS enumerates all connected subgraphs with girth $k$ in \linebreak $\order{\sum_{S \in \sig S}\size{V(G[S])}}$ total time using $\order{\max_{S \in \sig S}\set{\size{V(G[S])}^3}}$ space. 
\end{theorem}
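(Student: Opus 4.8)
The plan is to follow the template of Theorem~\ref{theo:fast}, verifying in turn correctness, the time bound, and the space bound. For correctness, note that \ELGS is the binary-partition scheme of \ELG with edges in place of vertices and with the candidates in $\cin{S}$ processed before those in $\cout{S}$; this reordering leaves the enumeration tree unchanged, so the edge-subgraph form of Lemma~\ref{lem:enum} already gives that every output is a connected subgraph of girth $\ge k$, that no solution repeats, and that all solutions appear. It then suffices to check that after each call to \NextC the maintained objects $\cand{S\cup\set{e}}$ and $\DistMS{S\cup\set{e}}$ match their definitions: this is Lemmas~\ref{lem:out}--\ref{lem:comp:cand} for the candidate set, while for $\DistMS{\cdot}$ it is a short case analysis on the relaxations in \UpdateThree (a pendant-vertex insertion when $e$ is an outer edge, a shortcut through $e$ when $e$ is an inner edge), which is legitimate here because adding one edge $\set{x,y}$ to a graph of girth $\ge k$ either keeps the girth or creates a shortest cycle of length exactly one more than the distance between $x$ and $y$ — so a plain distance table, with no second-distance structure, is enough.

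For the running time, let $T(X,Y)$ be the cost of the \NextC call that creates a child $Y$ of $X$. By Lemma~\ref{lem:comp:cand}, updating the candidate set costs $\order{\size{\cand{S(Y)}}+\size{V(G[S(Y)])}}$, where the $\size{\cand{S(Y)}}$ term appears when an outer edge is picked and many edges incident to the new vertex enter $\cout{\cdot}$. The remaining work is \UpdateThree, which iterates over the pairs of vertices of $A(Y)=\inset{v\in V(G[S(Y)])}{E(v)\cap\cand{S(Y)}\neq\emptyset}$, doing $\order{1}$ per pair; since $\size{A(Y)}\le\size{V(G[S(Y)])}$ and also $\size{A(Y)}\le 2\size{\cand{S(Y)}}$ (every vertex of $A(Y)$ is an endpoint of some candidate edge), this costs $\order{\size{A(Y)}^{2}}=\order{\size{\cand{S(Y)}}\cdot\size{V(G[S(Y)])}}$. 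Hence $T(X,Y)=\order{\size{V(G[S(Y)])}\,(1+\size{\cand{S(Y)}})}$.

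Now I amortize as in Theorem~\ref{theo:fast}. The node $Y$ has exactly $\size{\cand{S(Y)}}$ children in $\sig T$, and when $\cand{S(Y)}=\emptyset$ we have $A(Y)=\emptyset$ and $T(X,Y)=\order{\size{V(G[S(Y)])}}$; so I charge the term $\order{\size{\cand{S(Y)}}\cdot\size{V(G[S(Y)])}}$ to the children of $Y$, giving each child $Z$ a charge of $\order{\size{V(G[S(Y)])}}\le\order{\size{V(G[S(Z)])}}$ since $V(G[S(Y)])\subseteq V(G[S(Z)])$. Thus each iteration pays $\order{\size{V(G[S])}}$ for its own candidate update and $\order{\size{V(G[S])}}$ received from its parent, while the residual sum $\sum_{Y}\size{\cand{S(Y)}}=\size{\sig V}-1=\size{\sig S}-1$ is itself $\order{\sum_{S\in\sig S}\size{V(G[S])}}$. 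As every iteration outputs a distinct solution, the total time is $\order{\sum_{S\in\sig S}\size{V(G[S])}}$. For the space bound, iteration $X$ stores the $\size{A(X)}\times\size{A(X)}$ table $\DistMS{S(X)}$, i.e. $\order{\size{V(G[S(X)])}^{2}}$ words, together with the lists $\cin{S(X)}$ and $\cout{S(X)}$; multiplying by the recursion depth gives the claimed $\order{\max_{S\in\sig S}\set{\size{V(G[S])}^{3}}}$, by the same reasoning as in Theorem~\ref{theo:fast}.

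The step I expect to be the main obstacle is exactly this amortization of \UpdateThree: read literally it touches all $\Theta(\size{V(G[S])}^{2})$ pairs of $A(Y)$, a factor $\size{V(G[S])}$ above the target per solution, and the escape is the two-sided estimate $\size{A(Y)}\le\min\set{2\size{\cand{S(Y)}},\ \size{V(G[S(Y)])}}$, which lets the quadratic cost be spread over the $\size{\cand{S(Y)}}$ children of $Y$. A secondary point requiring care is to confirm that the bookkeeping for $\cin{\cdot}$ and $\cout{\cdot}$ (Lemmas~\ref{lem:delta} and~\ref{lem:move_cout}) and the construction of $A(Y)$ genuinely fit inside the $\order{\size{\cand{S(Y)}}+\size{V(G[S(Y)])}}$ budget of Lemma~\ref{lem:comp:cand}, and that the distance relaxations in \UpdateThree are sound under the inner-first ordering.
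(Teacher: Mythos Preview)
Your proof is correct and follows essentially the same approach as the paper's: correctness via Lemma~\ref{lem:enum}, the candidate-set update in $\order{\size{V(G[S(Y)])}}$ via Lemma~\ref{lem:comp:cand}, the $\order{\size{A(Y)}^2}$ cost of \UpdateThree, and the key amortization step that converts $\size{A(Y)}^2$ into $\size{\cand{S(Y)}}\cdot\size{A(Y)}$ using $\size{A(Y)}\le 2\size{\cand{S(Y)}}$ and then distributes $\order{\size{A(Y)}}\le\order{\size{V(G[S(Y)])}}$ to each of the $\size{\cand{S(Y)}}$ children of $Y$. The paper's version is slightly leaner---it cites Lemma~\ref{lem:comp:cand} directly for the $\order{\size{V(G[S(Y)])}}$ candidate cost rather than carrying an extra $\size{\cand{S(Y)}}$ term, and it omits your correctness discussion of the distance relaxations---but the structure and the decisive inequality are identical.
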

\begin{proof}
    The proof can be obtained by adapting that of Theorem~\ref{theo:fast}. A more detailed proof can be found in the appendix. 
\end{proof}


\begin{proof}
    From Lemma~\ref{lem:enum}, the correctness of \ELGS holds. 
    Let $\sig T = (\sig V, \sig E)$ be the enumeration tree made by \ELGS. 
    We first consider the space complexity of \ELGS. 
    In each iteration $X$, 
    \ELGS needs $\order{\max_{X \in \sig V}\set{\size{A(X)}^2}}$ for storing $\DistMS{S(X)}$. 
    In addition, 
    the height of $\sig T$ is 
    $\order{\max_{S \in \sig S}\set{\size{V(G[S])}}}$. 
    \ELGS traverses on $\sig T$ in a DFS manner. 
    Hence, the space complexity of \ELGS is 
    $\order{\max_{S \in \sig S}\set{\size{V(G[S])}}^3}$. 
    
    We next consider the time complexity of \ELGS. 
    Suppose that we add $e = \set{u, v}$ to $S(X)$, 
    and $S(Y) = S(X) \cup \set{e}$, that is, $Y$ is a child iteration of $X$. 
    Then, 
    $\DistMS[xy]{S(Y)} = 
    \min\set{\DistMS[xy]{S(X)}, \DistMS[xu]{S(X)} + \DistMS[vy]{S(X)} + 1, \DistMS[xv]{S(X)} + \DistMS[uy]{S(X)} + 1}$. 
    Thus, we can compute $\DistMS{S(Y)}$ from $\DistMS{S(X)}$ 
    in $\order{\size{A(Y)}^2}$ time since 
    each value of $\DistMS{S(Y)}$ can be computed in constant time. 
    From Lemma~\ref{lem:comp:cand}, 
    \ELGS needs 
    $\order{\size{V(G[S(X)])}  + \size{A(Y)}^2}$
    time for generating data structures for $S(Y)$ from those for $S(X)$. 
    Thus, 
    since $\size{V(G[S(X)]} \le \size{V(G[S(Y)])}$,  
    the total time of \ELGS is $\order{\sum_{X \in \sig V}\size{V(G[S(X)])} + \size{A(X))}^2}$. 
    Note that, $X$ has $\size{\cand{S(X)}}$ child iterations. 
    Moreover, $\size{A(X)}$ is at most $2\size{\cand{S(X)}}$ since 
    each vertex in $A(X)$ is incident to at least one edge in $\cand{S(X)}$. 
    Hence, $\order{\size{A(X)}^2} = \order{\size{\cand{S(X)}} \size{A(X)}}$.  
    Since $\size{\sig V} = 1 + \sum_{X \in \sig V}{\size{\cand{S(X)}}}$, 
    by delivering $\order{A(X)}$ time to each child of $X$, 
    the time complexity of \ELGS is 
    $\order{\sum_{X \in \sig V}(\size{V(G[S(X)])} +  \size{A(X)})}$. 
    In addition, 
    $\size{A(X)}$ is at most $\size{V(G[S(X)])}$ since  $A(X) \subseteq V(G[S(X)])$. 
    Hence, the statement holds. 
\end{proof}


\section{Conclusion}
In this paper, we addressed the
$k$-girth connected induced/edge subgraph enumeration problems.
We proposed two algorithms: \ELGIS for induced subgraphs and \ELGS for edge subgraphs. Both algorithms have $\order{n}$ time delay and require $\order{n^3}$ space (exact bounds are reported in Table~\ref{tab:summary}).
%
The algorithms can easily be adapted to relax the connectivity constraint and consider weighted graphs.
Other possibilities include applying the algorithms for network analysis and considering the more challenging problem of enumerating 
maximal subgraphs. 

\begin{table}[t]
    \centering
    \renewcommand{\arraystretch}{1.3}
    \begin{tabular}{cp{3.5cm}c}
 &  \hfil total time \hfil  & total space  \\ \hline\hline
\ELGIS & \hfil $\order{\sum_{S \in \sig S}\size{N[S]}}$ \hfil & $\order{\max_{S \in \sig S}\set{\size{N[S]}^3}}$ \\ \hline
\ELGS & \hfil $\order{\sum_{S \in \sig S}\size{V(G[S])}}$ \hfil & $\order{\max_{S \in \sig S}\set{\size{V(G[S])}^3}}$ \\ \hline 
    \end{tabular}
    \vspace{0.5em}
    \caption{Summary of our result. $\sig S$ is the set of all solutions. }
    \vspace{-1.5em}
    \label{tab:summary}
\end{table}

\bibliographystyle{abbrv}
\bibliography{main.bbl}

\end{document}